\renewcommand\footnotetextcopyrightpermission[1]{} 
\newtheorem{theorem}{Theorem}
\begin{document}
\title{Secure Hypersphere Range Query on Encrypted Data}

\author{Gagandeep Singh} \authornote{This work was done when the author was with IBM Research, India}
\affiliation{ \institution{Oblivious AI}}
\email{gagan.jld@gmail.com}

\author{Akshar Kaul}
\affiliation{ \institution{IBM Research, India}}
\email{akshar.kaul@in.ibm.com}
%
%
%

%
%


\keywords{Data privacy, Security}

\begin{abstract}
	Spatial queries like range queries, nearest neighbor, circular range queries etc. are the most widely used queries in the location-based applications. Building secure and efficient solutions for these queries in the cloud computing framework is critical and has been an area of active research. 
This paper focuses on the problem of \emph{Secure Circular Range Queries (SCRQ)}, where client submits an encrypted query (consisting of a center point and radius of the circle) and the cloud (storing encrypted data points) has to return the points lying inside the circle. 
The existing solutions for this problem suffer from various disadvantages such as high  processing time which is proportional to square of the query radius, query generation phase which is directly proportional to the number of points covered by the query etc.

This paper presents solution for the above problem which is much more efficient than the existing solutions. Three protocols are proposed with varying characteristics. It is shown that all the three protocols are secure. The proposed protocols can be extended to multiple dimensions and thus are able to handle \emph{Secure Hypersphere Range Queries (SHRQ)} as well.
Internally the proposed protocols use pairing-based cryptography and a concept of lookup table. To enable the efficient use of limited size lookup table, a new storage scheme is presented. The proposed storage scheme enables the protocols to handle query with much larger radius values. 
Using the \emph{SHRQ} protocols, we also propose a mechanism to answer the \emph{Secure range Queries.} 
Extensive performance evaluation has been done to evaluate the efficiency of the proposed protocols.\\





\end{abstract}

\maketitle

\section{Introduction}
Spatial data is widely used by various types of applications such as location-based services, computer aided design, geographic information system etc. The individual data point in such databases is represented as a point in the euclidean space. The applications query these spatial databases via various  types of geometric queries. One of the most widely used geometric query is the circular range query. The circular range query is defined by a center and a radius. The goal of the circular range query is to find all those data points which lie inside the circle defined by the circular range query. When the data has more than two dimensions, then it becomes a hypersphere query. 

The hypersphere queries are widely used to provide various services such as finding the point of interest within a certain distance from the user. For example, finding all the restaurants which are within 5 miles from the user's location.

In recent years with the emergence of cloud computing, various enterprises are outsourcing their data and computation needs to third party cloud service providers. Use of cloud services make economic sense for these organizations. However, it brings its  share of problems as well. One of the most prominent amongst them being security and privacy of data. Enterprises have to trust cloud service provider with their data which includes sensitive information as well. 

The most basic way to protect the data is to encrypt it using a secure encryption scheme such as AES, before transferring to the cloud. By doing so,  usability of the data outsourced to the cloud is reduced as no query can now be executed at the cloud server. For each query, the user has to download the data, decrypt it and run the query which will not be acceptable.

To solve this dual problem of securing the data as well as maintaining its usability, searchable encryption schemes have been proposed. These encryption schemes encrypt the data in such a way that one or more desired properties of data are preserved. For example, deterministic encryption of data preserves equality. 
In the context of searchable encryption schemes, problem of handling geometric queries over encrypted data has received very less attention. The existing solutions consists of two paper from Wang \emph{et al.} \cite{wang2015circular,wang2016geometric}.

In \cite{wang2015circular}, solution for secure circular range queries  over encrypted data was proposed. It used the existing predicate encryption scheme as base, and  developed the protocol to test ``if the query point lies on the circle or not''. For circular range query, this protocol is called repeatedly for all the concentric circles centered at query point and having  radius $\leq$ query radius ($R$). 
The problem with this approach is that the number of such concentric circles is $\mathcal{O}(R^2)$. Thus degrading the performance by a factor of $\mathcal{O}(R^2)$ and  thus depicting the non-scalability of the solution.

 In\cite{wang2016geometric},  another solution for handling any geometric query over encrypted spatial data was proposed. The  solution used a combination of bloom filters and predicate encryption scheme. In the scheme, for each data point a boom filter is created and stored at cloud and for geometric query, all the data points that lie within the query area are enumerated and inserted into a query bloom filter. 
 The cloud operation includes the comparison of the query bloom filter with the data point bloom filter
  and if the data bloom filter is fully captured inside the query bloom filter, it is added to the output. 
The operation is performed securely by using predicate encryption scheme which encrypts the bloom filters and converts the  bloom filter comparison to secure inner product computation and comparison. 
 The  solution suffers from false positives due to the use of bloom filter and since all the  points within query area have to be enumerated and inserted into query bloom filter, the solution is not scalable.
 
 Considering the shortcomings of two papers, in this paper, we propose a novel \emph{SHRQ} protocols for handling hypersphere range queries over encrypted spatial data. We  design a \emph{Component Encryption Scheme(CES)} which allows evaluation of degree two polynomials directly over ciphertext i.e. the scheme can be used to perform a  dot product between  encrypted data  and query point. 
 At the core of the  protocols, \emph{CES} scheme is used for secure evaluation of result set. For the scalability of solution w.r.t. high radius values, we present the new storage idea of \emph{Coarse Grained Granular Storage}.
  And unlike the existing solutions, the performance of the proposed scheme is found to be independent of query area size. 

We also present a solution to the problem of \emph{Secure Range Query}(SRQ) which is based on our solution for  \emph{SHRQ}. We present the remodeling of the underlying data points and query points that helps in simultaneous support of \emph{SRQ} queries along with  \emph{SHRQ} queries over the encrypted data. With the remodeling, we show that server protocol and servers data imprints of  \emph{SHRQ} and  \emph{SRQ}   queries are similar and therefore server can't even differentiate between the  types of queries, thus making the solution type oblivious.

The security of our solution is presented in standard attack models: \emph{Known Plaintext Attack(KPA)} model and \emph{Ciphertext Only Attack(COA)} model. We claim our system to be secure in these attack models. In each of these models, we  quantify the information that is required by the adversary to decrypt the whole database. 
To summarize, our main contributions are listed below:
\begin{enumerate}
	\item We present the first computationally efficient solution to \emph{Secure Hypersphere Range Query} in a single cloud server based system.
	\item The proposed system can also be used to answer \emph{Secure Range Queries}. To the best of our knowledge, this is the first system that answers the secure range query along with a secure location based query.
	\item We prove our system to be secure under the standard attack models: \emph{KPA} and \emph{COA}. 
\end{enumerate}
\section{Problem Framework}
In this section, we formulate the \emph{SHRQ} problem. We present details of the system setting describing various entities, solution objective and the adversarial model used for security analysis. We also present a brief description of the operations carried out in \emph{Hyphersphere Range queries (HRQ)} in plain text setting.

\subsection{System Setting}
\begin{figure}
	\includegraphics[width=\linewidth]{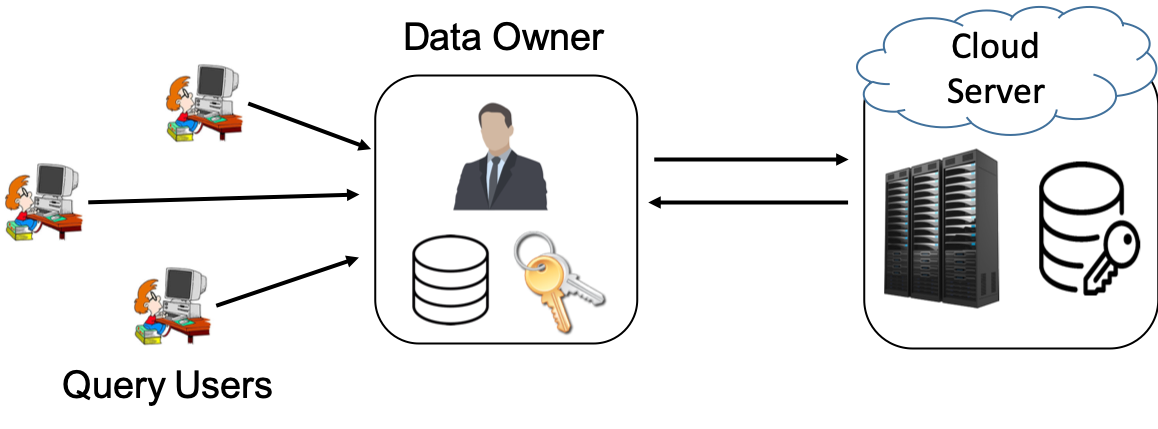}
	\caption{System Entities}
	\label{fig:entities}
\end{figure}

Entities present in a typical outsourced setting are shown in Fig. \ref{fig:entities}.
\begin{enumerate}
	\item \textbf{Cloud Server (CS)}: Cloud Server is a third-party service provider that provides storage and computation resources to its clients. It takes care of all the administration and management of computation hardware. And clients pay him for using its resources. 
	Examples of such service providers are IBM Cloudant \cite{URLIBM}, Amazon AWS \cite{URLAmazon}, Microsoft Azure \cite{URLMicrosoft} etc.
	
	\item \textbf{Data Owner (DO)}: In cloud computing framework, Data Owner represents the prospective client of Cloud Server, who is having a propriety ownership on some specific data and is willing to use Cloud Server resources for storage and computational requirements. 
	
	Pertaining to \emph{SHRQ} problem, Data Owner's data is represented by a data table $D =\{m^1,m^2 \cdots ,m^{|D|}\}$ where each $m^i$ is a $d$ dimensional data point.
	For using Cloud Server resources  \emph{securely}, Data Owner encrypts the table $D$ to $D' $ 
	and then outsources $D'$ to the Cloud Server.

	
	\item \textbf{Query User (QU)}: These are the authorized users who want to execute the \emph{SHRQ} queries over the  outsourced encrypted data $D'$. For this work, we assume that Query User will share their query in plaintext with Data Owner. Who, on behalf will perform the query execution protocol with Cloud Server and then share  the decrypted results back to Query User. While there are seperate papers where Query User's privacy is considered seperately and queries are also hidden from the Data Owner. The scope of this paper is constrained to developing a  cloud based SHRQ solution considering security from Cloud Server. The query user privacy consderations will require a seperate study over the proposed solution and thus is kept as a future scope of work.
	
	
\end{enumerate}

\subsection{Objective}\label{objective}
The main objective of Secure Hypersphere Range Query (SHRQ) system is to develop a system which enables Hypersphere Range Queries to be answered efficiently over outsourced data in a privacy preserving manner. 
A Hypersphere Range Query consists of a query point $q$ and a radius $r$. The goal is to find all those points in the database that lie inside the hypersphere defined by $q$ and $r$. 

The privacy preservation requires that, the data points as well as the query points should be protected from Cloud Server. This stems from the fact that Cloud Server is not trusted with  data. Also it has to be ensured that Cloud Server does not learn anything about either the data or query during the query processing phase.

One implicit requirement is that the cloud server should be able to compute the result set of SHRQ efficiently and that processing requirements for Data Owner and Query User should be low. 



\subsection{Adversarial Model}

We consider the security of our scheme in \emph{``Honest But Curious (HBC)"} adversarial model. In this adversarial model, adversary is \emph{honest} in performing all the steps of the protocol correctly but  is also \emph{curious} to learn more about the data which is being shared with him. HBC is the most widely used adversarial model in cloud computing setting. This model gives  adversary the freedom to observe the computations being done, and use the same for data inference. Some characteristics of the HBC model are as follows :-
\begin{itemize}
	\item Adversary does not tamper with the stored data.
	\item Adversary doesn't have the domain knowledge and statistics of the data. 
	\item Adversary knows all the details of the protocols being used. The only thing hidden from him is the secret key.
	\item The goal of adversary is to infer the plain text data of Data Owner.
\end{itemize}

We present security proofs for our protocols in Ciphertext Only Attack Model (COA) and Known Plaintext Attack Model (KPA). These model differ in the amount of powers which the adversary has for breaking the protocol.

\begin{itemize}
	\item Ciphertext Only Attack Model (COA)\\
	In this attack model, adversary only has access to the encrypted database $D'$. He can also observe the query processing computation being done over $D'$. The goal of the adversary is to decipher the encrypted database $D'$ to plaintext $D$.
	\item Known Plaintext Attack Model (KPA)\\
	Like COA model, the adversary has access to the encrypted database $D'$ and can observe the query process computations. Additionally, he also has access to few plain texts and their corresponding cipher text values. The adversary has no control over the plain texts for which he gets the cipher text values. The adversary goal is same as COA, i.e. to decipher the encrypted database $D'$.
\end{itemize}


\subsection{HRQ in Plain Text}
Now we will describe the basic computations required for executing \emph{HRQ} over plain text data. The data point in $D$ is assumed to be a $d$ dimensional vector, represented by $[m_1,m_2,...,m_d]$. And the HRQ query consists of a $d$ dimensional query point (represented by $q = [q_1,q_2,...,q_d]$) and a  radius $r$. We use the notation $Q\{q,r\}$ to represent the \emph{HRQ} query. 

HRQ evaluation between a data point $m$ and query $Q$ involves distance computation and comparison with query radius. 
\begin{enumerate}
	\item First the distance between the data point and query center is computed. \\
	$Dist = \sqrt{(m_1-q_1)^2 + (m_2-q_2)^2 + ... + (m_d-q_d)^2}$
	\item The computed distance $Dist$, is then compared with query radius. The data points whose distance from query center is less than radius are added to output. It requires checking of the following condition:\\
	$r -  Dist \geq 0$
\end{enumerate}

The above two step process can be reduced to the following inequality:
\begin{equation*}
	2(m_1.q_1+m_2.q_2+...+m_d.q_d)+r^2 - ||m||^2 - ||q||^2 \geq 0
\end{equation*}

Observe that the above inequality involves the following basic steps :
\begin{enumerate}[(a)]
	\item \textbf{Dot Product:} Dot product between two $d+2$ dimensional components:
	\begin{enumerate}[1.]
		\item \emph{Data Component} : $\{m_1,m_2,...,m_d,1,||m||^2\}$
		\item \emph{Query Component} : $\{2q_1,2q_2,...,2q_d,r^2-||q||^2,-1\}$
	\end{enumerate} 
	\item \textbf{Verification:} Verify if the \emph{Dot Product} is non negative.
\end{enumerate}

In the proposed \emph{SHRQ} scheme, we present the solution of performing the above two operations securely and with computational efficiency.

\section{Background}
In this section, we are going to present the bilinear mapping based background art, that is used to build the \emph{SHRQ} solution.  Specifically, we give brief overview of bilinear groups of composite order in section \ref{CompositeOrder}. In section \ref{BGN} and \ref{modified_bgn}, we first give a background on existing 2DNF homomorphic encryption scheme and then propose the modified encryption scheme with security that is used by our \emph{SHRQ} protocols.

\subsection{Bilinear  Groups of Composite Order } \label{CompositeOrder}
Following, we give a brief overview of the composite order finite groups supporting  bilinear mapping operations, first introduced in \cite{boneh2005evaluating}.  In our scheme, we use \emph{bilinear groups} of composite order, where order is of the size of product of two prime numbers. \\
Let $\mathcal{G}$ be a group generating algorithm that takes a security parameter $1^\lambda$ and outputs a tuple $(q_1,q_2,\mathbb{G,G}_T,e)$ where $q_1,q_2$ are distinct prime numbers, $\mathbb{G,G}_T$ are cyclic groups of order $N = q_1\times q_2$, and a pairing map: $e : \mathbb{G \times G \rightarrow G}_T$, which satisfies following bilinear properties: 
\begin{enumerate}
	\item Bilinear : $\forall x,y \in \mathbb{G}, \forall a,b \in \mathbb{Z}, e(x^a,y^b) = e(x,y)^{ab}$ 
	\item Non-degeneracy : $\exists g \in \mathbb{G}$ s.t. $e(g,g)$ has an order of $N$ in $\mathbb{G}$
	\item Efficiency : Group operations in $\mathbb{G,G}_T$ and computation of pairing map $e$ can be done efficiently.
\end{enumerate}
Since $\mathbb{G}$ is a composite order group, with order of $q_1\times q_2$, there will exist subgroups of order $q_1 \text{ and } q_2$. We denote such  subgroups as  $\mathbb{G}_{q_1}$ and $\mathbb{G}_{q_2}$. We will also use the following \emph{mathematical facts} related to bilinear groups of composite order in the scheme construction :
\begin{itemize}
	\item $\forall x \in \mathbb{G}_{q_1}, \forall y \in \mathbb{G}_{q_2}, e(x,y)=1 $.
	\item $\forall x,y \in \mathbb{G}$ s.t. $x=x_{q_1}x_{q_2}$,  $y=y_{q_1}y_{q_2}$ the mapping $e(x,y) = e(x_{q_1},y_{q_1})e(x_{q_2},y_{q_2})$ holds when $x_{q_1}, y_{q_1}\in \mathbb{G}_{q_1}$ and $x_{q_2}, y_{q_2}\in \mathbb{G}_{q_2}$.
\end{itemize}

\subsection[BGN Cryptosystem]{BGN Encryption Scheme \cite{boneh2005evaluating}}\label{BGN}
This encryption scheme $(\mathcal{E}_{BGN})$ supports 2DNF homomorphic computation over the ciphertext i.e. it supports single multiplication operation and unlimited subsequent additions over encrypted ciphertext. The 2DNF homomorphic property  also helps in \emph{Dot Product} evaluation. 

The Encryption protocols defined under $\mathcal{E}_{BGN}$ are as follows:
\begin{itemize}
	\item KeyGen : Given a security parameter $1^\lambda$, $\mathcal{G}$ generates a tuple $(q_1,q_2,\mathbb{G,G}_T,e)$. 
	Generate two random generators $g,u$ from group $\mathbb{G}$, and set $h = u^{q_2}$, here $h$ would be a generator of subgroup of $\mathbb{G}_{q_1}$. Public Key, $\mathcal{PK} = \{{N,\mathbb{G,G}_T,e,g,h}\}$ and Secret Key, $\mathcal{SK} = \{q_1\}$.
	\item Enc$(\mathcal{PK},m)$ : Pick a random element $r \in \mathbb{Z}_N$ and compute ciphertext $C = g^mh^r$. 
	\item Dec$(\mathcal{SK},C)$ : The ciphertext $C$ could belong to $\mathbb{G}$ or $\mathbb{G}_T$. For decryption, DO first computes modular exponentiation of ciphertext with $\mathcal{SK}$, i.e. $(C)^{q_1} = (g^mh^r)^{q_1} = (g^m)^{q_1}$ and then have to compute the costly discrete log of $(g^m)^{q_1}$ w.r.t. $q_1$. 
\end{itemize}
Computation of plaintext message from decryption protocol value, Dec$(\mathcal{SK},C)$  can be done by following ways:
\begin{enumerate}
	\item Pollard's lambda\cite{boneh2005evaluating}: The complexity of this method is computationally high, of the order of $\mathcal{O}(\sqrt{T})$ where $T$ is the message space, which is directly proportional to message space
	\item Maintain a \emph{lookup table}, mapping plaintext messages to decryption protocol value, Dec$(\mathcal{SK},C)$. This table is pre-computed and stored in the database for faster decryption. Since the table is stored in database, it is bounded by the storage space and thus can store limited set of values from the message space. 
\end{enumerate}
Both the above points suggest that either the scheme $\mathcal{E}_{BGN}$ works efficiently when the  message space is small, or the underlying system should use precomputed \emph{lookup table}. The challenge of using the lookup table lies in the fact of limited storage space. Thus requiring a design decision on what values to store or a strategy for optimum use of table for values that are not in table.

The homomorphic properties of the scheme can be observed as follows:
\begin{enumerate}
	\item Multiplication: Given the cipher text $C_1,C_2 \in \mathbb{G}$ w.r.t. messages $m_1,m_2$. One time multiplication can be supported using bilinear mapping as following
	\begin{align*}
	C &= e(C_1,C_2)=e(g^{m_1}h^{r_1},g^{m_2}h^{r_2})\\
	&=e(g,g)^{m_1m_2}.e(g,h)^{m_1r_2+m_2r_1+\alpha q_2r_1r_2}\\
	&=g_1^{m_1m_2}h_1^{r'} \in \mathbb{G}_T
	\end{align*}
	\item Addition : Clearly it can be observed that the cipher texts $C_1,C_2 \in \{ \mathbb{G}\text{ or }\mathbb{G}_T\}$ supports addition by using cyclic multiplicative operation in both the groups.
\end{enumerate}

\subsection{Component Encryption Scheme(CES)}\label{modified_bgn}


As mentioned in section \ref{objective}, our system involves computation w.r.t. \emph{Data Component} and \emph{Query Component}.
Therefore at first, every $d$ dimensional data (\emph{data point} or \emph{query point}) is first converted to $d+2$ dimensional \emph{Component} (\emph{Data Component} or \emph{Query Component}).

For encryption of \emph{Data Component} and \emph{Query Component} as whole, we propose the \emph{Component Encryption Scheme} $(\mathcal{E}_{CES})$ which maintains the necessary computation support over 2DNF formula over finite dimensional vector.

Following are the various protocols involved in $\mathcal{E}_{CES}$:
\begin{itemize}
	\item KeyGen : Call the group generating algorithm with security parameter $1^\lambda$ which generates the parameters $\{q_1,q_2,\mathbb{G,G}_T,e\}$. Here, $q_1,q_2$ are prime numbers and $N=q_1\times q_2$ is the order of composite groups $\mathbb{G,G}_T$. Further we generate following parameters :
	\begin{enumerate}[a.]
		\item Like $\mathcal{E}_{BGN}$ Cryptosystem, generate  the parameters $g,h$ and then set another parameter $s = g^{q_1}$. Both $s$ and $h$, are elements of subgroups $\mathbb{G}_{q_1}$ and $\mathbb{G}_{q_2}$ respectively of group $\mathbb{G}$. 
		\item Generate two $(d+2)$ size random vectors $A,B$ s.t. dot product, $A.B^T = \text{\LARGE\calligra{r}}.q_1$. Here,{ \LARGE\calligra r} is a random integer in $Z_N$. 
	\end{enumerate}
	Parameters $\{N,\mathbb{G,G}_T,e\}$ are shared with cloud and all other information is kept secret.
	\item TupleEncryption: This is a randomized protocol. Select a random integer$\text{ \LARGE\calligra r}_m \in Z_N$. And encrypt each attribute of the data component as:
	\begin{equation*}
	m'_i =s^{m_i}h^{\text{\large\calligra r}_mA_i} \text{ ,where  } i \in [0,n+2]
	\end{equation*}
	\item QueryEncryption: This procedure generates encryption of query component. DO choses three random integers $\alpha,\beta, \text{\LARGE\calligra r}_q \in Z_N$ and encrypts the Query Component as below:
	\begin{equation*}
	q'_i =
	\begin{cases}
	s^{(q_i + \beta).\alpha}h^{\text{\large\calligra r}_qB_i} & \text{if }i=n+1\\
	s^{q_i. \alpha}h^{\text{\large\calligra r}_qB_i}&\text{otherwise}\\
	\end{cases}
	\end{equation*}
	%
	
\end{itemize}
\subsubsection{Security}~\\
The proposed scheme $\mathcal{E}_{CES}$  varies from the $\mathcal{E}_{BGN}$ encryption scheme by following ways:
\begin{enumerate}
	\item[A.] \emph{Parameter sharing is more restrictive. While BGN shares $\{N,\mathbb{G,G}_T,e,g,h\}$ as public key, the proposed system only shares $\{N,\mathbb{G,G}_T,e\}$ as computation parameters.}
	\item[B.] \emph{The cipher text format of $\mathcal{E}_{BGN}$ system is of form $g^mh^r$, while $\mathcal{E}_{CES}$  encrypts data and query attributes seperately with both having ciphertext format as $s^{x_i}h^{r.Y_i}$.}
\end{enumerate}
From security stand point w.r.t. point A, it can be generalized that \emph{ 
the security of encryption scheme with more restriction on public parameters can not be less than the encryption scheme with more parameters.}

For point B, we prove the security of $\mathcal{E}_{CES}$ using  semantic security assumption of $\mathcal{E}_{BGN}$ as base. We refer reader to section 3.2 of \cite{boneh2005evaluating} for detailed security understanding of $\mathcal{E}_{BGN}$. 
In brief, the semantic security claim of  $\mathcal{E}_{BGN}$ states that : \emph{Given a ciphertext $m_b'$ of one of the messages, $m_0$ or $m_1$, it is computationally hard for any polynomial time adversary to identify that if $m_b'$ is encryption of $m_0$ or $m_1$}.

We prove that if $\mathcal{E}_{CES}$ is not secure then so would be $\mathcal{E}_{BGN}$, which goes against the base assumption of semantic securty of $\mathcal{E}_{BGN}$. And since the base assumption is known to be \emph{True}, the proposition that \emph{``$\mathcal{E}_{CES}$ is not secure''} should be \emph{False}.

For proof, we assume another transitive system $(\mathcal{E}_{TRANS})$ that produces cipher text of form $s^m.h^r$ w.r.t.  $\mathcal{E}_{BGN}$.
Security is proven by following implication theorems. 
\begin{equation*}
Security(\mathcal{E}_{BGN}) \implies Security(\mathcal{E}_{TRANS}) \implies Security(\mathcal{E}_{CES})
\end{equation*}
\begin{theorem}\label{Theorem1}
	The security of encryption scheme $\mathcal{E}_{BGN}$ implies the security of encryption scheme  $\mathcal{E}_{TRANS}$
	
	$Security(\mathcal{E}_{BGN}) \implies Security(\mathcal{E}_{TRANS})$
\end{theorem}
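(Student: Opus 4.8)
I would argue in the contrapositive by a black-box reduction: from a PPT adversary $\mathcal{A}$ that wins the semantic-security game against $\mathcal{E}_{TRANS}$, build a PPT adversary $\mathcal{B}$ that wins it against $\mathcal{E}_{BGN}$, contradicting the stated assumption on $\mathcal{E}_{BGN}$. The argument rests on the fact that $\mathcal{E}_{TRANS}$ is $\mathcal{E}_{BGN}$ ``up to the base change $g\mapsto s$'': since $h$ has order $q_1$, a fresh mask $h^{r}$ with $r$ uniform in $\mathbb{Z}_N$ is uniform over $\mathbb{G}_{q_1}$ and carries no information in \emph{either} scheme; and since $s=g^{q_1}$ lies in $\mathbb{G}_{q_2}$, a ciphertext $s^{m}h^{r}$ of $\mathcal{E}_{TRANS}$ has exactly the same shape as a ciphertext $g^{m}h^{r}$ of $\mathcal{E}_{BGN}$, namely a fixed message-carrying $\mathbb{G}_{q_2}$-component times an independent uniform $\mathbb{G}_{q_1}$-mask, the only difference being which group element encodes the plaintext. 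Moreover, by the general principle in point~A, $\mathcal{E}_{TRANS}$ publishes no more than $\mathcal{E}_{BGN}$, so $\mathcal{B}$ can always feed $\mathcal{A}$ a valid view by simply withholding the surplus public values.

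\noindent The reduction would then take the routine form: $\mathcal{B}$ receives the $\mathcal{E}_{BGN}$ parameters, forwards to $\mathcal{A}$ the subset constituting an $\mathcal{E}_{TRANS}$ public key, and when $\mathcal{A}$ commits to challenge messages $m_0,m_1$, $\mathcal{B}$ relays them to its own challenger, receives the $\mathcal{E}_{BGN}$ challenge $C^{\ast}=g^{m_b}h^{r}$, converts it to $\mathcal{E}_{TRANS}$ form, returns that to $\mathcal{A}$, and outputs $\mathcal{A}$'s guess. The natural conversion is $C^{\ast}\mapsto(C^{\ast})^{q_1}h^{r'}$ with a fresh uniform $r'$: the mask of $C^{\ast}$ is killed because $h$ has order $q_1$, and $(C^{\ast})^{q_1}=g^{q_1 m_b}=s^{m_b}$, so $(C^{\ast})^{q_1}h^{r'}=s^{m_b}h^{r'}$, which is distributed exactly as a genuine $\mathcal{E}_{TRANS}$ encryption of $m_b$. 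Verifying this one-line distributional identity (a consequence of the structural facts above) makes $\mathcal{B}$'s advantage equal to $\mathcal{A}$'s and closes the argument.

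\noindent The step I expect to be the real obstacle is precisely this conversion: $(C^{\ast})^{q_1}$, and indeed any honest way of producing a ciphertext $s^{m}h^{r}$ from the public data, requires the exponent $q_1$ (equivalently the element $s=g^{q_1}$), which is the \emph{secret} key of $\mathcal{E}_{BGN}$ and is not available to a legitimate reduction. I would get around this by not demanding an explicit ciphertext translation and instead routing the implication through the subgroup-decision assumption underlying the semantic security of $\mathcal{E}_{BGN}$ (Section~3.2 of \cite{boneh2005evaluating}): phrased in that (subgroup-membership) style, an $\mathcal{E}_{TRANS}$ challenge and an $\mathcal{E}_{BGN}$ challenge become the same instance of ``decide whether a given element of $\mathbb{G}$ lies in $\mathbb{G}_{q_1}$'' up to a re-randomization of the $\mathbb{G}_{q_1}$-mask that needs no secret key, so no $s$-dependent simulation is ever required. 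Since that subgroup-decision problem is exactly the one from which $\mathcal{E}_{BGN}$'s semantic security is derived and is therefore implied by $Security(\mathcal{E}_{BGN})$, we conclude $Security(\mathcal{E}_{BGN})\implies Security(\mathcal{E}_{TRANS})$ without the reduction ever touching the factorization of $N$.
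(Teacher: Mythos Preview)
Your overall contrapositive-reduction strategy matches the paper's, but you take an unnecessarily circuitous route and miss the paper's key simplification. You try to \emph{convert} the BGN challenge $C^\ast=g^{m_b}h^r$ into a TRANS ciphertext via $(C^\ast)^{q_1}h^{r'}$, correctly notice that this requires the secret $q_1$, and then retreat to an indirect argument through the subgroup-decision assumption. The paper instead observes that \emph{no conversion is needed at all}: since $g$ lives in the composite-order group, the CRT decomposition gives $g=\tilde s\cdot h'$ with $\tilde s$ in one prime-order subgroup and $h'=h^{\gamma}$ in the other (the one generated by $h$). Hence
\[
C^\ast \;=\; g^{m_b}h^{r} \;=\; \tilde s^{\,m_b}\,h^{m_b\gamma+r} \;=\; \tilde s^{\,m_b}\,h^{r'},
\]
which is already a syntactically valid TRANS ciphertext with fresh randomness $r'=m_b\gamma+r$. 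Because $\mathcal{E}_{TRANS}$ publishes only $\{N,\mathbb{G},\mathbb{G}_T,e\}$ and keeps the base $s$ hidden, the TRANS adversary $\mathcal{B}$ must succeed for whatever hidden base the challenger uses; so $\mathcal{A}$ simply forwards $C^\ast$ unchanged and relays $\mathcal{B}$'s guess. No exponentiation by $q_1$, no detour through subgroup decision.

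Your fallback via subgroup decision is not wrong in spirit, but as written it is only a sketch: you assert that both schemes' challenges ``become the same instance'' of subgroup membership, yet you would still need to argue the direction \emph{subgroup decision hard $\Rightarrow$ $\mathcal{E}_{TRANS}$ secure} and to justify that $Security(\mathcal{E}_{BGN})$ actually yields hardness of subgroup decision (i.e.\ the reverse of the usual reduction). The paper's one-line algebraic rewrite avoids all of this, and it is the idea you are missing.
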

\begin{proof}
	Assuming that scheme is not secure, then there exist a polynomial time adversarial algorithm $\mathcal{B}$ that breaks the security of   $\mathcal{E}_{TRANS}$. Using $\mathcal{B}$ we can construct adversarial algorithm $\mathcal{A}$ that breaks the security claim of challenger  $\mathcal{C}$ w.r.t.  $\mathcal{E}_{BGN}$. Initially $\mathcal{C}$  gives the public key of $\mathcal{E}_{BGN}$ to $\mathcal{A}$ as input, $\mathcal{A}$ works as follows:
	\begin{enumerate}
		\item $\mathcal{A}$ shares computation parameters with $\mathcal{B}$ .
		\item $\mathcal{B}$ outputs two messages $m_0,m_1$ to $\mathcal{A}$ as challenge text which are sent as it is to $\mathcal{C}$. $\mathcal{C}$ responds to $\mathcal{A}$ with the encryption of $m_b$ s.t. \emph{CT = }$g^{m_b}h^r$, A further sends the \emph{CT} to $\mathcal{B}$.
		\item $\mathcal{B}$ outputs its guess $b'$.
	\end{enumerate}
	\emph{Ciphertext Correctness:} While $\mathcal{C}$ sends the cipher text of format $g^mh^r$ to $\mathcal{B}$, $\mathcal{B}$ expects the format to be of form $s^mh^r$. Following we show that both represents the same:
	\begin{align*}
	\text{\emph{CT}} &= g^mh^r\\
	&= (s.h')^mh^r
	\end{align*} 
	Since $g \in \mathbb{G}$ (composite order group), it can be expressed in terms of  some specific elements of  underlying subgroups i.e. $s \in \mathbb{G}_{q_1}$ and  $h' \in \mathbb{G}_{q_2}$. As also, $h \in \mathbb{G}_{q_2}$, we can represent $h' = h^\gamma$ for some $\gamma \in (1,\cdots,q_1)$. CT  now becomes:
	\begin{align*}
	\text{\emph{CT}} &= s^mh^{r+m\gamma}\\
	&=s^mh^{r'} 
	\end{align*} 
	
	As $\mathcal{B}$ wins the semantic security game against $\mathcal{E}_{TRANS}$, the constructed adversary $\mathcal{A}$ will win security game against $\mathcal{E}_{BGN}$ which goes against the $\mathcal{E}_{BGN}$ security. \hfill $\qedhere$
\end{proof}

\begin{theorem}
	The security of encryption scheme $\mathcal{E}_{TRANS}$ implies the security of encryption scheme  $\mathcal{E}_{CES}$
	
	$Security(\mathcal{E}_{TRANS}) \implies Security(\mathcal{E}_{CES})$
\end{theorem}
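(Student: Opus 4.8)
The plan is to argue by contraposition, echoing the proof of Theorem~\ref{Theorem1}. Suppose a polynomial time adversary $\mathcal{B}'$ wins the semantic security game against $\mathcal{E}_{CES}$; from it I would build an adversary $\mathcal{A}'$ that wins the game against $\mathcal{E}_{TRANS}$, contradicting the hypothesis. On receiving the $\mathcal{E}_{TRANS}$ parameters (in particular $N$ and the generators $s,h$), $\mathcal{A}'$ first extends them to a full $\mathcal{E}_{CES}$ key by sampling, itself, the auxiliary $(d+2)$-vectors $A,B$ and the scalars $\alpha,\beta$ used in \emph{QueryEncryption}. The relation $A\cdot B^{T}=\text{\LARGE\calligra{r}}\,q_1$ is no obstacle: it is equivalent to $h^{A\cdot B^{T}}=1$ since $h$ has order $q_1$, and in any case it is only needed for correctness of the server-side dot product and not for hiding, so $\mathcal{A}'$ may fix $A,B$ however is convenient (e.g. with every entry of $A$ a unit in $\mathbb{Z}_N$). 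It then forwards to $\mathcal{B}'$ precisely the tuple $\{N,\mathbb{G},\mathbb{G}_T,e\}$ that $\mathcal{E}_{CES}$ publishes.

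The core is a hybrid over the $d+2$ attributes of the challenge \emph{Component}. When $\mathcal{B}'$ submits components $M_0,M_1$, let $H_j$ be the distribution in which attributes $1,\dots,j$ are drawn from $M_1$ and attributes $j+1,\dots,d+2$ from $M_0$; then $H_0$ and $H_{d+2}$ are the two real challenge distributions, so a distinguishing advantage $\varepsilon$ of $\mathcal{B}'$ yields advantage at least $\varepsilon/(d+2)$ on some adjacent pair $H_j,H_{j+1}$. These hybrids agree on every attribute but the $j$-th, whose ciphertext has the form $s^{x}h^{\text{\large\calligra r}Y_j}$ with $x$ one of $m_{0,j},m_{1,j}$ --- exactly the $\mathcal{E}_{TRANS}$ shape $s^{x}h^{r'}$ once $r'$ is read as $\text{\large\calligra r}\,Y_j$, the same re-reading applied to \emph{CT} in Theorem~\ref{Theorem1}. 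So $\mathcal{A}'$ submits the two candidate values of attribute $j$ to its own challenger, plants the reply in slot $j$, synthesizes the remaining $d+1$ slots, and echoes $\mathcal{B}'$'s guess. \emph{QueryEncryption} is handled identically, with $\mathcal{A}'$ itself picking $\alpha,\beta$, which only add masking; and since the randomizers $\text{\large\calligra r}_m,\text{\large\calligra r}_q$ are redrawn per tuple, encryptions of data points other than the challenge are trivial to simulate and independent of it.

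The step I expect to be the real difficulty is ``synthesizing the remaining $d+1$ slots''. In a genuine $\mathcal{E}_{CES}$ tuple all $d+2$ attribute masks are $h^{\text{\large\calligra r}_m A_i}$ for one shared randomizer $\text{\large\calligra r}_m$, hence correlated (their exponents form a rank-one vector), while the $\mathcal{E}_{TRANS}$ challenge hands $\mathcal{A}'$ only the product $s^{x}h^{r'}$ and not the bare mask $h^{r'}$, so $\mathcal{A}'$ cannot transplant the challenge randomizer onto the other slots and must instead mask them with fresh, independent values. I would therefore complete the argument in two moves: (i) observe that, because the plaintext sits in the $s$-factor and the mask in the $h$-factor, and these lie in complementary prime-order subgroups of $\mathbb{G}$ whose separation requires the hidden factorization of $N$, $\mathcal{B}'$ cannot detect the difference between correlated and independent masks --- this is a subgroup-decision step of exactly the kind that underpins the semantic security of $\mathcal{E}_{TRANS}$ (equivalently $\mathcal{E}_{BGN}$); and (ii) with that in hand the synthesized tuple is computationally indistinguishable from a real $\mathcal{E}_{CES}$ challenge, so $\mathcal{A}'$ inherits, up to a negligible term, the advantage $\varepsilon/(d+2)$ against $\mathcal{E}_{TRANS}$, contradicting Theorem~\ref{Theorem1} and forcing $\mathcal{E}_{CES}$ to be secure.
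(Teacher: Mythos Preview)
Your proposal is correct, but the paper's own argument is considerably lighter than yours and takes a different route. The paper does \emph{not} run a hybrid over the $d+2$ attributes and never confronts the correlated-randomness issue you single out. Instead it treats a CES ciphertext as a single scalar object of the form $s^{m}h^{rA}$ and uses a one-line algebraic trick: the simulator $\mathcal{A}$, playing CES challenger to $\mathcal{B}$, fixes $A$ itself, forwards the pair $(m_0A^{-1},\,m_1A^{-1})$ to the $\mathcal{E}_{TRANS}$ challenger, receives $CT=s^{m_bA^{-1}}h^{r}$, and simply returns $CT'=CT^{A}=s^{m_b}h^{rA}$ to $\mathcal{B}$. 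That is the whole reduction.

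What each approach buys: the paper's exponent-rescaling trick is tight (no $1/(d+2)$ loss) and needs no auxiliary subgroup-decision step, but it only certifies single-attribute indistinguishability and is silent on the fact that in a real CES tuple all $d+2$ masks share one randomizer~$\text{\large\calligra r}_m$. Your hybrid takes that coupling seriously; the extra ``independent-versus-correlated masks'' step you flag is exactly the subgroup-decision assumption already underlying $\mathcal{E}_{BGN}$, so it is morally free, and you end up with a statement about the full component encryption rather than one coordinate. In short, your route is more careful about what ``security of $\mathcal{E}_{CES}$'' ought to mean; the paper's route is shorter because it implicitly narrows the claim to per-attribute semantic security.
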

\begin{proof}
	The proof follows the similar flow of assuming the existence of polynomial adversarial algorithm $\mathcal{B}$ that breaks the security of $\mathcal{E}_{CES}$ where \emph{CT} is of format $s^m.h^{r.A}$. Using this adversary we can construct another $\mathcal{A}$ that breaks the security claim of challenger $\mathcal{C}$ w.r.t. $\mathcal{E}_{TRANS}$ which goes against the theorem \ref{Theorem1} and thus against security of the $\mathcal{E}_{BGN}$. The game proceeds as below:
	\begin{enumerate}
		\item $\mathcal{C}$ shares computation parameters with $\mathcal{B}$, who further shares the same with $\mathcal{A}$ .
		\item $\mathcal{B}$ outputs two messages $m_0,m_1$ to $\mathcal{A}$ as challenge text. $\mathcal{A}$ in turn sends two messages $m_0.A^{-1},m_1.A^{-1}$ to $\mathcal{C}$. $\mathcal{C}$ responds to $\mathcal{A}$ with the encryption of $m_b.A^{-1}$ s.t. \emph{CT = }$s^{m_b.A^{-1}}h^r$. $\mathcal{A}$ further process \emph{CT}, s.t. \emph{CT'} = $(\text{\emph{CT}})^A$  and sends \emph{CT'} to $\mathcal{B}$.
		\item $\mathcal{B}$ outputs its guess $b'$.
	\end{enumerate}
	\emph{Ciphertext Correctness:} It can be observed that \emph{CT'} received by $\mathcal{B}$ is of correct format as expected. 
	\begin{align*}
	\text{\emph{CT'}} &=\text{\emph{CT}}^A\\
	&= s^{m_b}h^{r.A}
	\end{align*} 
	As $\mathcal{B}$ wins the security game against $\mathcal{E}_{CES}$, the constructed $\mathcal{A}$ will win the security game against $\mathcal{E}_{TRANS}$ which goes against the  $\mathcal{E}_{TRANS}$ security proven in  theorem \ref{Theorem1} and hence against  $\mathcal{E}_{BGN}$. \hfill  $\qedhere$
\end{proof}
\subsubsection{Useful Function}
%
\begin{enumerate}[A.]
	\item \textbf{Compute(m',q'):} This procedure is called by CS during the  Query phase of \emph{SHRQ} protocol. It takes $\mathcal{E}_{CES}$ encrypted data component  and  query component vectors as inputs
	and computes deterministic dot product value of the two vectors. Shown as below:\\
	1. \textbf{Homomorphic Multiplication:} First, for each attribute in $m'$ and $q'$ a bilinear mapping is computed as follow
	\begin{align*}
	e(m'_i,q'_i) =
	\begin{cases}
	e(s^{m_i}h^{\text{\large\calligra r}_mA_i},s^{(q_i + \beta). \alpha}h^{\text{\large\calligra r}_qB_i}) & \text{if }i=n+1\\
	e(s^{m_i}h^{\text{\large\calligra r}_mA_i},s^{q_i \alpha}h^{\text{\large\calligra r}_qB_i})&\text{otherwise}\\
	\end{cases}
	\\=
	\begin{cases}
	e(s,s)^{m_i(q_i + \beta).\alpha}e(h,h)^{\text{\large\calligra r}_m\text{\large\calligra r}_qA_iB_i} & \text{if }i=n+1\\
	e(s,s)^{m_iq_i \alpha}e(h,h)^{\text{\large\calligra r}_m\text{\large\calligra r}_qA_iB_i})&\text{otherwise}\\
	\end{cases}
	\end{align*}
	2. \textbf{Homomorphic Addition:}Now, compute the modular multiplication operations in group $\mathbb{G}_T$ of $n+2$ values. The value $T$ thus computed will be:
	\begin{align*}
	T &= e(s,s)^{(m.q+\beta)\alpha}e(h,h)^{\text{\large\calligra r}_m\text{\large\calligra r}_qAB}\\
	&= e(s,s)^{(m.q+\beta)\alpha}&\text{as } e(h,h)^{\text{\large\calligra r}_m\text{\large\calligra r}_qAB}=1
	\end{align*}
	The computed deterministic exponential modular dot product value is protected by the hardness assumption of \emph{Discrete Logarithm Problem}. Under  \emph{Discrete Logarithm Problem} assumption, given the exponentiation value it is hard to compute the exponent w.r.t. base in big cyclic group.
	
	\item \textbf{CreateLookupTable($v$):} This function is called by DO for a Lookup table($\mathcal{L}$) construction. Lookup table is used by CS during the Query phase of $SHRQ$ protocol. Before construction of table, both CS and DO agree upon \emph{secure hash function ($\mathcal{H}$)} and thereafter DO constructs a table $\mathcal{L}$ s.t. $\mathcal{L} = \{\mathcal{H}(e(s,s)^{(i+\beta)\alpha}),\text{ where }i=[0,v]\}$
	
\end{enumerate}

\section{SHRQ Protocols}\label{protocols}
In this section, we will present new \emph{SHRQ} protocols which can be used for Secure Hypersphere Range Query execution. 
In section \ref{protocol1}, we present the protocol $SHRQ_T$ where we used the  \emph{lookup table} idea along with  $\mathcal{E}_{CES}$ for query execution. 
In section \ref{sec:cggs}, we introduce the idea of \emph{Coarse Grained Granular Storage} which is used in protocol $SHRQ_C$ presented in section \ref{protocol2}. $SHRQ_C$ improves maximum supported query radius compared to \emph{$SHRQ_T$} but introduces \emph{false positives} to the solution. 
In section \ref{protocol3}, we present the protocol $SHRQ_L$ which overcome the limitation of $SHRQ_C$ with \emph{no false positives} in the solution set. 

Each of these protocols have two phases:
\begin{itemize}

	\item \emph{Setup Phase :}
	In this phase, DO generates the encryption keys and uses them to encrypt the database and generate the Lookup Table. The encrypted database and Lookup Table is then uploaded to the CS.
	 
	\item \emph{Query Phase :}
	This phase represents the scenario of QU querying the encrypted database stored at CS. It involves QU encrypting the query and then sending the same to CS. CS then executes the encrypted query to get the data points which satisfy the Hypersphere Range Query. The data points in the result set are then given to the QU who decrypts them to get the plain text result set.
	
\end{itemize}


\subsection{Protocol SHRQ\texorpdfstring{\textsubscript{T}}{\texttwoinferior}}\label{protocol1}
This section describes the protocol $SHRQ_T$ which uses ${E}_{CES}$ to build a protocol for Secure Hypersphere Range Query execution. 

\subsubsection{Setup Phase}\label{prot1-setupphase}
This phase is executed by the DO  to encrypt and upload the database to CS. 
\begin{enumerate}
	\item DO invokes Keygen of $\mathcal{E}_{CES}$. 
	\begin{itemize}
		\item Parameters $\{N, \mathbb{G, G}_{T},e\}$ are shared with the CS.
		\item Other parameters are kept secret by the DO.
	\end{itemize}
	Also, DO and CS jointly agree upon a \emph{secure hash function} $\mathcal{H}$ used in construction of $\mathcal{L}$
	\item DO encrypts  data point in the database as follows :
	\begin{itemize}
		\item Compute $d+2$ dimensional data component representation $m$ w.r.t. data point and assigns a unique id $(\mathring{i})$ to $m$.
		
		\item  Use \emph{TupleEncryption} function of $\mathcal{E}_{CES}$ to encrypt  $m$. The output is encrypted data point $m'$.
		
		\item Store ($\mathring{i}$, $m'$) at the CS in database \emph{db-query}.
		
		\item Encrypt the data point with a separate encryption scheme like \emph{AES}. Let \~{m} represents the encryption of data point with \emph{AES}. The \emph{AES} encrypted data point is used for faster decryption of the data points that are part of the result set.
		
		\item Store ($\mathring{i}$, \~{m}) at the CS in database \emph{db-store}.
	\end{itemize}

	\item DO calls the procedure \emph{CreateLookupTable}$(v)$ to generate lookup table ($\mathcal{L}$) of size $v$. DO decides the size based on his requirement and system configuration of CS. DO uploads the generated $\mathcal{L}$ to CS.
\end{enumerate}

\subsubsection{Query Phase}\label{prot1-queryphase}
This phase is initiated by the QU to execute a hyper sphere range query, $Q\{q,r\}$ over the encrypted database. 
\begin{enumerate}
	\item If $r > \sqrt{v}$ then the query is not supported by the current instantiation. No further processing is done.
	
	\item QU encrypts the query component w.r.t.  $Q\{q,r\}$ using \emph{QueryEncryption} to get encrypted query $q'$. QU then sends $q'$ to the CS for further processing.
	

	\item CS starts with an empty result set $Res$. For each of the encrypted data point $m'$ in \emph{db-query}, cloud server performs the following steps:
	\begin{enumerate}[i]
		\item Call \emph{Compute} function with  $m'$ and  $q'$. Let $T = Compute(m',q')$. 

		\item If $\mathcal{H}(T)$ is present in the Lookup Table $\mathcal{L}$, then get the AES encrypted instance of \~{m} from the db-store and add it to the result set $Res$.
		
	\end{enumerate}
	CS returns the result set $Res$ to the QU as the answer to the Hyper sphere range query.
	
	\item QU decrypts the elements of result set $Res$ and carries out the following validation step to filter out the false positives:
	\begin{itemize}
		\item Compute the distance between the decrypted data point and the query center.
		\item If the distance is greater than the query radius then the decrypted data point is discarded.
	\end{itemize}
	The validation step is needed because  $\mathcal{L}$ stores hash of the values which can lead to false positives. For standard hash function, the probability of such collisions is very less and thus very less false positives. In out experiments, we didn't encounter any false positives due to usage of hash in $\mathcal{L}$
\end{enumerate}

\subsubsection{Limitation}\label{prot1-limitation}
The query radius in $SHRQ_T$ is limited by the size $v$ of the Lookup Table. Specifically this protocol can only handle queries having $radius \leq v$. This is reflected in the Step 1 of Query Phase. This protocol is suitable for those applications where the maximum query radius is limited. Of course, the maximum query radius can be increased by increasing the size of Lookup Table. Generation of Lookup Table is a one time setup cost which can be amortized over large number of queries. However it does require extra storage at the cloud server for storing the Lookup Table.


\subsection{Coarse Grained Granular Storage (CGGS)}\label{sec:cggs}
In this section, we will describe a storage scheme, \emph{Coarse Grained Granular Storage (CGGS)} which helps in reducing the limitation of $SHRQ_T$.
The main idea of CGGS is :

\emph{Store multiple copies of the data item. In the first copy, actual data is considered for storage. In subsequent copies, the data is stored after transformation to a coarser euclidean space. In the coarser euclidean space, the unit distance between the two points is less than the distance  in the actual euclidean space. This reduction in the distance increases the  query radius range that can be supported given a Lookup Table.}

To formalize the coarser grained euclidean space we define the term :\\
\emph{f-coarser Euclidean Space ($f-c\mathcal{ES}$):} We define $f-c\mathcal{ES}$ as a transformed euclidean space where metric value $x$ in any dimension is transformed to $\lfloor \frac{x}{f} \rfloor$. In this notation, base euclidean space $\mathcal{ES}$ can also be represented as $1-c\mathcal{ES}$.

Here, $f$ is defined as the \emph{coarsity factor} which can take any positive value. Anyhow for the case we deal in this paper, $f$ can also be represented in the form of $f=b_c^{e_{\tiny{c}}}$, where $b_c$ represents \emph{coarsity base} while $e_c$ represents \emph{coarsity exponent}.
\vspace{1mm}
\\
\vspace{1.5mm}
{\Large\textbf{Distance Impact}}\\
Now we will quantify the relationship of the distances, between the points in the $1-c\mathcal{ES}$ and $f-c\mathcal{ES}$.
Let distance between two \emph{d-dimensional} points in base euclidean space ($1-c\mathcal{ES}$) be $dist_{1c}$ and distance between the same points in $f$-coarser transformed euclidean space be $dist_{fc}$. Then the following equations shows the relationship between the two values:
{\small
\begin{equation}\label{dist}
\frac{dist_{1c}}{f} - \sqrt{d} \leq dist_{fc} \leq \frac{dist_{1c}}{f} + \sqrt{d}
\end{equation}
}
{\Large\emph{Proof:} }
We prove equation \ref{dist} for 2-dimensional points. 
For higher dimensions, proof can be extended straight forwardly.
Deriving from \ref{dist}, it is sufficient to prove the following:
{\small
\begin{equation*}
(f.dist_{fc} - dist_{1c})^2 \leq 2f^2
\end{equation*}
}
Let us consider two 2-dimensional points $(x_1,y_1)$ and  $(x_2,y_2)$.
{\tiny
\begin{align*}
dist_{1c} &= \sqrt{(x_1-x_2)^2 + (y_1-y_2)^2}\\
dist_{fc} &= \sqrt{(\lfloor\frac{x_1}{f}\rfloor-\lfloor\frac{x_2}{f}\rfloor)^2 + (\lfloor\frac{y_1}{f}\rfloor-\lfloor\frac{y_2}{f}\rfloor)^2}\\
&\leq \sqrt{(\frac{x_1-x_2}{f}+1)^2 + (\frac{y_1-y_2}{f}+1)^2}
\end{align*}
}
Let $x_1-x_2 = X$ and $y_1-y_2 = Y$
{\tiny
\begin{align*}
(f.&dist_{fc} - dist_{1c})^2 \leq ({\sqrt{(X+f)^2+(Y+f)^2}-\sqrt{X^2+Y^2}})^2\\
&= 2(f^2 + X^2+Y^2+X.f+Y.f) - 2\sqrt{(X^2+Y^2+X.f+Y.f)^2 + f^2(X-Y)^2}\\
&\leq 2(f^2 + X^2+Y^2+X.f+Y.f) - 2\sqrt{(X^2+Y^2+X.f+Y.f)^2 }\\
&\leq 2f^2
\end{align*}
}

Equation \ref{dist} is used to resize the query radius when using data values stored in $x-c\mathcal{ES}'s$ where $x \geq 2$.
\subsection{Protocol SHRQ\texorpdfstring{\textsubscript{C}}{\texttwoinferior}}\label{protocol2}
In this section, protocol $SHRQ_C$ is presented which uses the storage scheme of \emph{CGGS} to overcome the limitations of $SHRQ_T$. In the protocol, DO stores multiple encrypted coarse grained copies of data at the CS (in addition to actual data). At the query time, the database copy of data with minimum coarsity ($f$) that  satisfies the query radius ($\leq \sqrt{v}$) is chosen for query execution.
For ease of explanation we chose \emph{coarsity base}, $b_c$ as $2$. We also assume that DO chooses a parameter $E_{max}$, representing  maximum coarsity exponent $e_c$.

\subsubsection{Setup Phase}
In this phase, DO creates multiple copies of data, encrypts them and uploads them to the cloud.
\begin{enumerate}
	\item DO invokes Keygen of $\mathcal{E}_{CES}$. 
	\begin{itemize}
		\item Parameters $\{N, \mathbb{G, G}_{T},e\}$ are shared with the Cloud Server. Also, DO and CS jointly agree upon on $\mathcal{H}$.
		\item Other parameters are kept secret by the DO.
	\end{itemize}
	
	\item DO encrypts each data point in the database as follows:-
	\begin{itemize}
		\item For $e_c$ = 0 to $E_{max}$
		\begin{itemize}
			\item Assign a unique id, $\mathring{i}$ to the data point.
			\item Compute the transformation of data point in $2^{e_{c}}-c\mathcal{ES}$ and then create the \emph{data component} vector. Let it be $m$.
			\item Use $TupleEncryption$  function of $\mathcal{E}_{CES}$ to encrypt $m$. The output is encrypted data point $m'$.
			\item Store ($\mathring{i}, m'$) at the CS in database instance $db-query^{e_c}$. DO creates separate database for each value of $e_c$.
		\end{itemize}
	
	\item Encrypt the data point with a separate encryption scheme like \emph{AES}. Let \~{m} represents the encryption of data point with \emph{AES}. 
		
	\item Store ($\mathring{i}$, \~{m}) at the CS in database \emph{db-store}.
		
	\end{itemize}
	
	\item DO calls the procedure \emph{CreateLookupTable}$(v)$ to generate $v$ sized lookup table ($\mathcal{L}$) and uploads it to CS.
	

\end{enumerate}

\subsubsection{Query Phase}
QU steps for query, $Q\{q,r\}$ are as follows:
\begin{enumerate}
	\item If $(r / 2^{E_{max}} + \sqrt{d}) > \sqrt{v}$ then the query is not supported by the current instantiation. No further processing is done. 
	
	\item QU computes  $e$ such that :
	\begin{enumerate}
		\item If $r  \leq  \sqrt{v}$, then $e=0$
		\item Else, $e$ is minimum $e_0$ s.t. $(r/2^{e_0} +\sqrt{d}) \leq  \sqrt{v}$, where $e_0 \in \{1,2,\cdots,E_{max}\}$
	\end{enumerate}

	\item Transform the query $Q\{q,r\}$ to $2^e-c\mathcal{ES}$, let the transformed query be $Q\{\hat{q},\hat{r}\}$. Transformed Query radius i.e. $\hat{r}$ is computed as following (it uses equation \ref{dist}) :
	\begin{enumerate}
		\item If $e=0$, $\hat{r} = r$
		\item Else, $\hat{r} = \lceil \frac{r}{2^e}+\sqrt{d}\rceil $
	\end{enumerate}
	The transformed query center i.e. $\hat{q}$ is computed by finding the transformation of $q$ in $2^e-c\mathcal{ES}$
	
	
	\item QU encrypts the query component w.r.t. $Q\{\hat{q},\hat{r}\}$ using $QueryEncryption$ function to get encrypted $q'$. QU then sends $q'$ to  CS for  further processing.
	\item CS performs similar operations as done in $SHRQ_T$. It starts with an empty result set \emph{Res} and for each  encrypted data point $m'$ in $db-query^{e}$, CS performs the following steps:
		\begin{enumerate}[i]
		\item Call \emph{Compute} function with  $m'$ and  $q'$. Let $T = Compute(m',q')$.
		\item If $\mathcal{H}(T)$ is present in the $\mathcal{L}$, then get the AES encrypted instance \~{m} from the db-store and add it to the result set $Res$.
	\end{enumerate}
	CS returns the result set $Res$ to the QU as its answer to the Hypersphere Range Query.
%
	
	\item Query User decrypts the elements of result set $Res$ using AES. Then it carries out the following validation step to filter out the false positives:
	\begin{itemize}
		\item Compute the distance between the decrypted data point and the query center.
		\item If the distance is greater than the query radius then the decrypted data point is discarded.
	\end{itemize}
	The validation step is needed because there will be some false positives due to: (1). Adjustment of $\sqrt{d}$ (due to equation \ref{dist})) made in radius value due to coarse grained execution, (2). Use of hash function in $\mathcal{L}$.
	As pointed in protocol $SHRQ_T$, the impact on false positives due to use of hash function in $\mathcal{L}$ is minimal.
\end{enumerate}

\subsubsection{Limitation} 
The result set returned by CS to the QU contains false positives.
 This is primarily because when executing in coarser grained (higher $c\mathcal{ES}$) the points that lie outside the circumference in $1-c\mathcal{ES}$ could now fall inside the hyper sphere due to transformation. Apart from this we also resized the radius(adding $\sqrt{d}$) to avoid false negatives due to equation \ref{dist}. Both the reasons collectively leads to false positives in the solution.

\subsection{Protocol SHRQ\texorpdfstring{\textsubscript{L}}{\texttwoinferior}}\label{protocol3}
In this section we will present protocol $SHRQ_L$, which improves upon the $SHRQ_C$. Specifically it reduces the storage space in comparison to $SHRQ_C$ and also does not give any \emph{false positives}.
However it does take more time in comparison to $SHRQ_C$. While $SHRQ_C$ works in a single iteration over the database, this protocol requires multiple passes over the database. Hence  provides an interesting design choice for the system admin.


This protocol uses the new designed \emph{`layered approach'} for query execution which helps in no false positives. In the first layer, the query is executed in $1-c\mathcal{ES}$. If the query radius $r > \sqrt{v}$, the result set will contain all the points that exist inside the hyperspherical disk bounded by radius width $[r-\sqrt{v},r]$. For the next layer, the query is executed in a higher $c\mathcal{ES}$. Radius chosen for the next layer of execution is $r-\sqrt{v}+f.\sqrt{d}$. This layer of execution will return points which exist inside a hyperspherical disk whose radius width is greater than the radius width of disk in previous layer (this is because the execution is happening in a coarser $\mathcal{ES}$). During the current layer's execution, the previous layers disk space acts as a buffer, so that points which fall inside the disk due to the distance impact (equation \ref{dist}) are true positive and not false positives.

The coarsity base $b_c$ for this protocol is carefully chosen by the following formula:
\begin{equation}\label{eq-bc}
b_c = \lfloor \frac{\sqrt{v}}{2\sqrt{d}+1}\rfloor
\end{equation}
\textbf{Explanation of equation \ref{eq-bc}:} The buffer space units (previous layer query space) in current  $c\mathcal{ES}$ is of the size of lookup table units supported by previous layer i.e $\sqrt{v}$. Buffer space units are $2\sqrt{d}+1$ due to the following:
\begin{enumerate}
	\item In the subsequent higher $c-\mathcal{ES}$, the radius value is adjusted by $+\sqrt{d}$ to adjust for the points falling outside the boundary of hypersphere.
	\item Another $\sqrt{d}$ addition is to consider the points that fall inside the hemisphere of adjusted radius.
	\item The additional plus 1 is added to adjust the ceiling operation that is used in radius computation for higher   $c-\mathcal{ES}$.
\end{enumerate}

The value of coarsity base for this protocol helps in reducing the storage requirements as compared to $SHRQ_C$.
In $SHRQ_L$, for supporting a given  query radius range $[0,R_{max}]$ less number of db stores ($\log_{b_c}R_{max}$) are required as compared to $SHRQ_C$, ($\log_{2}R_{max}$). \\

The details of $SHRQ_L$ as follow:
Like before, DO choses the parameter $E_{max}$ representing the maximum coarsity exponent.


\subsubsection{Setup Phase}
In this phase, DO creates multiple copies of data, encrypts them and uploads them to the cloud.
\begin{enumerate}
	\item DO invokes Keygen of $\mathcal{E}_{CES}$. 
	\begin{itemize}
		\item Parameters $\{N, \mathbb{G, G}_{T},e\}$ are shared with the Cloud Server. Also DO and CS jointly agree on Hash function $\mathcal{H}$.
		\item Other parameters are kept secret by the DO.
	\end{itemize}
	
	\item DO encrypts each data point in  database as follows :
	\begin{itemize}
		\item For $e_c$ = 0 to $E_{max}$
		\begin{itemize}
			\item Assign a unique id, $\mathring{i}$ to the data point
			\item Compute  transformation of data point in  $b_{c}^{e_{c}}-c\mathcal{ES}$ and then create the \emph{data component} vector. Let it be $m$.
			\item Use $TupleEncryption$  function of $\mathcal{E}_{CES}$ to encrypt $m$. The output is encrypted data point $m'$.
			\item Store ($\mathring{i}, m'$) at the cloud server in database $db-query^{e_c}$. Data Owner creates separate database for each value of $e_c$.
		\end{itemize}
		
		\item Also, encrypt the data point with a separate encryption scheme like \emph{AES}. Let \~{m} represents the encryption of data point with \emph{AES}. 
		
		\item Store ($\mathring{i}$, \~{m}) at the cloud server in database \emph{db-store}.
	\end{itemize}
	
	\item DO calls the procedure \emph{CreateLookupTable}$(v)$ to generate $v$ sized lookup table ($\mathcal{L}$) and uploads it to the CS.

\end{enumerate}

\subsubsection{Query Phase}
QU initiates query phase agiainst the query, $Q\{q,r\}$ and steps are as follows:
\begin{enumerate}

	\item If $(r / 2^{E_{max}} + \sqrt{d}) > \sqrt{v}$ then the query is not supported by the current instantiation. No further processing is done. 
	\item QU computes the query radius values that are used in subsequent layered execution using  Algorithm \ref{layered_radius}.
	\begin{algorithm} 
		\DontPrintSemicolon
		
		\KwIn{$r$, $\sqrt{v},d,b_c$}
		\KwResult{\emph{Map m} : radius values in subsequent layers}
		\textbf{Inititialize} : $Map<Layer,Radius>,map=\psi$\\
		\emph{Add Radius w.r.t. Layer 0, i.e. $\{0,r\}$ to map} \\
		\emph{Set }$r=r-\sqrt{v}$\emph{ and Set $i=1$}\\
		\While{$r>0$ }{
			$r=r + (b_c)^i.\sqrt{d}$\;
			\emph{Add $\{i,r\}$ to map}\;
			$r=r-(b_c)^i.\sqrt{v}$\;
			$i=i+1$\;
		}
		\caption{\textbf{Layered Radius Values Generation Procedure}}
		\label{layered_radius}
	\end{algorithm}
\end{enumerate}
For each \emph{layer i} $\in map$ repeat following steps 3-5 with $r = map.get(i)$. Initialize empty result set $Res$.
\begin{enumerate}
	\setcounter{enumi}{2}

	\item Transform the Query $Q\{q,r\}$ in $(b_c)^i-c\mathcal{ES}$ to get $Q\{\hat{q},\hat{r}\}$. Transformed Query radius $\hat{r}$ is computed as following (using equation \ref{dist}) :
	\begin{enumerate}
	\item If $i=0$, $\hat{r} = r$
		\item Else, $\hat{r} = \lceil \frac{r}{(b_c)^i}\rceil $
	\end{enumerate}
	The transformed Query center i.e. $\hat{q}$ is computed by finding the transformation of $q$ in $(b_c)^i-c\mathcal{ES}$.

	\item QU encrypts the query component w.r.t. $Q\{\hat{q},\hat{r}\}$ using $QueryEncryption$ function to get encrypted $q'$. QU then sends $q'$ to  CS for  further processing.
	\item CS performs operations similar to done in protocols $SHRQ_T, SHRQ_C$. For each of the encrypted data point $m'$ in $db-query^{i}$, CS performs the following steps:
	\begin{enumerate}[i]
	\item Call \emph{Compute} function with  $m'$ and  $q'$. Let $T = Compute(m',q')$.
		\item If $\mathcal{H}(T)$ is present in the $\mathcal{L}$, then get the AES encrypted instance \~{m} from the db-store and add it to the result set $Res$.
		
	\end{enumerate}
	\item CS returns the result set $Res$ to the QU as its answer to the Hypersphere Range Query.
	
	\item QU decrypts the elements of result set $Res$ using AES. Thereafter it carries out the following validation step to filter out the false positives:
	\begin{itemize}
		\item Compute the distance between the decrypted data point and the query center.
		\item If the distance is greater than the query radius then the decrypted data point is discarded.
	\end{itemize}
	At the end of the validation step, Query User has the plain text result set for his Hypersphere range query.
	The validation step is needed for the similar reason as in protocol $SHRQ_T$. 
\end{enumerate}
\vspace{2mm}
{\Large\textbf{Database Updates}}
\vspace{1mm}\\
The proposed system  supports  dynamic insertions/updates. Any new data point can be uploaded by DO to CS using data point encryption (\emph{Step 2 of Setup Phase of all the protocols}). However, for dynamic updates DO first removes the existing  records and then reinserts the updated record. Our system stores the AES encrypted values w.r.t. each data point (used for referencing of records) and index $\mathring{i}$ is referenced with every storage of $\mathcal{E}_{CES}$ encrypted and AES encrypted data values. The updating record can be reference and removed by using AES encryption and the respective index $\mathring{i}$.
\section{Secure Range Query}
In this section, we adapt the \emph{SHRQ} protocols to build a solution for the problem of \emph{ Secure Range Query}.

\subsubsection*{Secure Range Query (SRQ)} SRQ is defined as following:\\
\emph{Given an encrypted database $D'$, 
	Cloud Server should be able to correctly retrieve the records which satisfy the range predicate in the encrypted queries $q'$ such that CS doesn't knows the underlying plain text information of $D'$ and $q'$. } 

Range Query over the column can be segregated by the following types:
\begin{enumerate}[i.]
	\item Closed Range : These queries involve selection of values between close range $[V_{Left},V_{Right}]$. For example, for a query \emph{``select * from employees where age $\geq$ 25 and age $\leq$ 50''}, close range over \emph{``Age''} column is $[25,50]$.
	\item Open Range: These queries involve selection of values w.r.t. open range $[V_{Left},\infty)$ or $(-\infty,V_{Right}]$. For example, in \emph{``select * from employees where age $\geq$ 25'' }, open range over \emph{``Age''} column is $[25,\infty]$.  
\end{enumerate}
It can be observed that given the statistical information of \emph{(min, max)} w.r.t. column, every open range query can be converted to close range query. For example, if it is known that $max(Age) = 60$, then the open range query $[25,\infty]$ over \emph{``Age''} column can become a close range query $[25,60]$. Both the queries will retrieve the same records.

To solve \emph{SRQ} using \emph{SHRQ}, it can be observed that \emph{SHRQ} solution w.r.t. single dimension data retrieves records in the close range $[center-Radius,center+Radius]$. Therefore given a close range query over a column $[V_{Left},V_{Right}]$, it can be modeled as a \emph{SHRQ} query in single dimension with \emph{query point} $ = \frac{V_{Left}+V_{Right}}{2}$ and $Radius = \frac{V_{Right}-V_{Left}}{2}$.

To incorporate the support for \emph{SRQ} queries, data owner has to consider the change in Data Component and Query Component vectors as below:
	\begin{itemize}
	\item Data Component : Unlike before (section \ref{objective}), the Data Component vector will now be a $(2d+1)$ dimensional vector: $\{m_1,\cdots,m_d,1,m_1^2\cdots,m_d^2\}$
	\item Query Component: Query Component vector will now consider the type of the query and will format the Query Component vector as following:
	\begin{enumerate}
		\item \emph{SRQ Query:} Assuming the range query $[V_L,V_R]$ is asked over column index $i$, then the equivalent \emph{HRQ} query $Q\{q_i,r\}$, correspondingly is represented as $Q\{\frac{V_L + V_R}{2},\frac{V_R - V_L}{2}\}$. The data component representation of the same query is as below:
		\begin{equation*}
		\{0_1,\cdots, 2q_i,\\\cdots,0_d,R^2-||q_i||^2,0_1,\,-1_i,\cdots,0_d\}
		\end{equation*}
		\item \emph{SHRQ Query:} Query component w.r.t. \emph{d dimensional SHRQ query} is as below:
		\begin{equation*}
		\{2q_1,2q_2,\cdots ,2q_d,R^2-\sum_{i=1}^{d}q_i^2,-1_1,\cdots,-1_d\}
		\end{equation*}
		\emph{SHRQ} query over arbitrary columns can be modeled by modifying above expression keeping only the required column terms and replacing rest of the terms by $0$.
	\end{enumerate}
	\end{itemize}
Apart from the Component vectors change another change to be made is in \emph{KeyGen} function of $\mathcal{E}_{CES}$ where the size of vectors $A,B$ would now be $2.d+1$. Rest of the Scheme and protocol remains same.

It can be observed that the size of the Query Component remains the same on whether the query is \emph{SRQ} or \emph{SHRQ}. And since the sizes are same, the CS will  be oblivious to the type from the encrypted query $q'$. In case of \emph{SRQ}, he will not be able to identify even the column on which the range predicate is applied. Thus making the scheme \emph{Type and Column Oblivious}.

In contrast to the existing solutions of \emph{Order Preserving Encryption (OPE)}, our solution doesn't leak any information w.r.t. ordering of column data. And the oblivious nature of our scheme hinders the adversary from gradual building any solution maps using the subsequent query execution output. Also the current state of the art $OPE$ suffers from the stateful nature of the scheme, where the \emph{client-server} collectively manages the dynamic state and the encryption of a new value requires mutation of the subset of already stored encrypted values. While the proposed scheme is a stateless scheme supporting any dynamic updates or insertions.

\section{Security} \label{security}
We present the security of our protocols in two attack models: \emph{COA} and \emph{KPA}

\subsection*{Ciphertext Only Attack (COA)}
In COA attack model, adversary has access to only the ciphertext values. He can also observe the computation tasks performed over the encrypted data. His aim is to decipher the encrypted database points. 

In the presented scheme, the COA adversary can observe the computational output and lookup table access in the query phase execution. Any reverse map building using the adversarial guess on the computation value will first require adversary to identify the type of  query. Our protocol can be used to solve:
\begin{enumerate}
	\item  Range query over encrypted database column (column oblivious). 
	\item Hypersphere range query over arbitrary number of columns. 
\end{enumerate}
For both the solutions, the protocol execution in the query phase is similar. Apart from guessing the type, the adversary should guess the column imprints over which the query executes.
This variable (type of query) consideration with every query execution will make it difficult for COA adversary to build a useful information map and thus will make it difficult to decipher the encrypted database points.

\subsection*{Known Plaintext Attack Model (KPA)} 
In this attack model, adversary also has access to few plaintext and corresponding ciphertext values. The goal of  adversary is to know the plain text of rest of the accessible encrypted database.

KPA security analysis w.r.t. spatial range queries, like kNN, range query etc. involve the study of attack feasibility  to reach KPA adversaries goal. The attack metric used is  in terms of the amount of  \emph{Plaintext (PT)} and \emph{Ciphertext(CT)} values required to decrypt the whole database.

To break a single  \emph{d dimensional} encrypted data point $DP$, the attack construction consists of building a solvable set of \emph{d linear equations}. The linear system is usually built by using the \emph{PT, CT} pairs that are available to adversary under KPA. For the attack on $DP$, he uses access of  $d$ such pairs of Query Points $QP_i$, $1\leq i \leq d$ and the scheme output of each $QP_i$ with $DP$. Following we show such a system of equation.
\begin{equation*}
\begin{aligned}
\begin{bmatrix} \text{\emph{d PT}}\\\text{\emph{DP}}\\ {varaibles} \end{bmatrix}_{1\times d} \times 
&\begin{bmatrix}
\begin{bmatrix} \text{\emph{PT of}}\\QP_1 \end{bmatrix}^T
\begin{bmatrix} \text{\emph{PT of}}\\QP_2 \end{bmatrix}^T
\cdots
 \begin{bmatrix} \text{\emph{PT of}}\\QP_d \end{bmatrix}^T
\end{bmatrix}_{d\times d}\\
&= 
\begin{bmatrix} 
\text{\emph{Scheme Output}} \\ 
\text{\emph{given CT's of}} \\ 
\text{\emph{DP and QP}}_i  
\end{bmatrix}_{1 \times d}
\end{aligned}
\end{equation*}
The adversary would reach a solvable system of equations, if he gets the useful inferable information from the scheme output, thus deciphering the particular data point $DP$. For decrypting the whole database, it reduces to the fact on whether the same set of pairs ($PT,CT$ pairs for $QP_i$'s) could be used in breaking  other data points or not.

We present the KPA security analysis in the same attack setting to decrypt the whole encrypted Database.

In our schemes presented in section \ref{protocols}, the CS side computation involving encrypted DP and encrypted QP consists of computation of deterministic modular exponent value $(\text{\emph{in }}  \mathbb{G}_T)$ and verification of same value in the lookup table. W.r.t. deterministic value, the information leakage requires adversary to have high computation power (\emph{Secure under hardness of Discrete logarithm problem}). The leakage w.r.t. lookup table access  (\emph{for True Positives})  leaks the  following information:
\begin{itemize}
	\item For Query phase execution in $1-c\mathcal{ES}$, the adversary  can guess the scheme computation value i.e. $\text{\emph{Radius}}^2 -\text{\emph{Dist(DP,QP)}}^2$ in range $[0,len(\text{\emph{lookup table}})]$.
	\item For Query phase execution in higher coarser spaces, $f-c\mathcal{ES}$, the adversary guesses the scheme computation value i.e. $\text{\emph{Radius}}^2 -\text{\emph{Dist(DP,QP)}}^2$ in range $[0,f\times len(\text{\emph{lookup table}})]$. 
\end{itemize}
For adversary, to decipher a particular encrypted data point$-DP'$ by building the system of linear equations following conditions needs to meet:
\begin{enumerate}
	\item The adversary access to \emph{d Plaintext-Ciphertext Query points} s.t. $DP'\in \text{\emph{True positives}}$ for each of the query.
	\item Adversary correctly guesses the computation output w.r.t. lookup table access.
\end{enumerate}
In KPA attack model, adversary doesn't have the choice on \emph{PT-CT Query points} that are revealed to him. So, to break the particular point, number of \emph{PT-CT} pairs could range between $[d-|D|]$.

And, to decipher the whole database assuming the metric range in each dimension being $X$ and lookup table size $v$, the number of queries could range between $[d.(\frac{X^d}{v^d}),|D|]$. which would be of the order of database size itself.

\section{Experiments}\label{experiments}

In this section, we present the experimental evaluation of our proposed Protocols. The experiments were carried out on two machines, one acting as the DO and the other  as  CS. Both the machines have the following configuration, Ubuntu 16.04.5 LTS running on 16 core Intel(R) Xeon(R) Gold 6140 CPU @ 2.30GHz and 64 GB memory. The protocols were written in Java using Java Pairing-Based Cryptography Library (JPBC) \cite{moreJPBC} using the group size of 1024 bits. The two machines were connected to each other via 
high bandwidth network.

Figure \ref{fig:data_enc_dim} shows how data encryption per database copy varies with the number of dimensions in the data. From the figure, it is clear that data encryption time increases linearly with the number of dimensions. It is expected since work to be done increases linearly with number of dimensions in the data.
It is important to note here that the data encryption time per copy is same for all the proposed Protocols. \emph{$SHRQ_T$} has only one copy of data, so this figure represents data encryption time for \emph{$SHRQ_T$}. On the other hand, \emph{$SHRQ_C$} and \emph{$SHRQ_L$} have multiple copies of databases depending on configuration. Hence the total data encryption time for them will be  multiple of the numbers presented in the figure.

Figure \ref{fig:data_enc_dp} shows how data encryption per database copy varies with number of data points. The figure clearly shows an expected linear relationship between the number of data points and time taken to encrypt the database copy. 
It is important to note here that data encryption time per copy is same for all the proposed Protocols. \emph{$SHRQ_T$} has only one copy of data, so this figure represents data encryption time for \emph{$SHRQ_T$}. On the other hand, \emph{$SHRQ_C$} and \emph{$SHRQ_L$} have multiple copies of data based on configuration. Hence the total data encryption time will be a multiple of the numbers presented in the figure.

Figure \ref{fig:prot_dim} shows how the query execution varies with the number of dimensions in the data. The figure shows a linear correlation between number of dimensions and query execution time. This is expected since increasing the number of dimensions causes increased effort to compute the distance between the data point and query point.
The figure also shows that both \emph{$SHRQ_T$} and \emph{$SHRQ_C$} have almost identical running times. This is because both of these protocols make a single pass over the data. \emph{$SHRQ_C$} has some extra false positives which have to removed at the end, leading to slightly increased running time.
On the other hand, \emph{$SHRQ_L$} requires much higher running time than both \emph{$SHRQ_T$} and \emph{$SHRQ_C$}. This is because \emph{$SHRQ_L$} has to make multiple pass over the data (two in our experiments). 

Figure \ref{fig:prot_dp} shows how the query execution varies with the number of data points. The figure shows an expected linear correlation between number of data points and query execution time.
The figure also shows that both \emph{$SHRQ_T$} and \emph{$SHRQ_C$} have almost identical running times. This is because both of these protocols make a single pass over the data. \emph{$SHRQ_C$} has some extra false positives which have to removed at the end, leading to slightly increased running time.
On the other hand, \emph{$SHRQ_L$} requires much higher running time that both \emph{$SHRQ_T$} and \emph{$SHRQ_C$}. This is because \emph{$SHRQ_L$} has to make multiple pass over the data (two in our experiments). 

\begin{figure}
	\includegraphics[width=\linewidth]{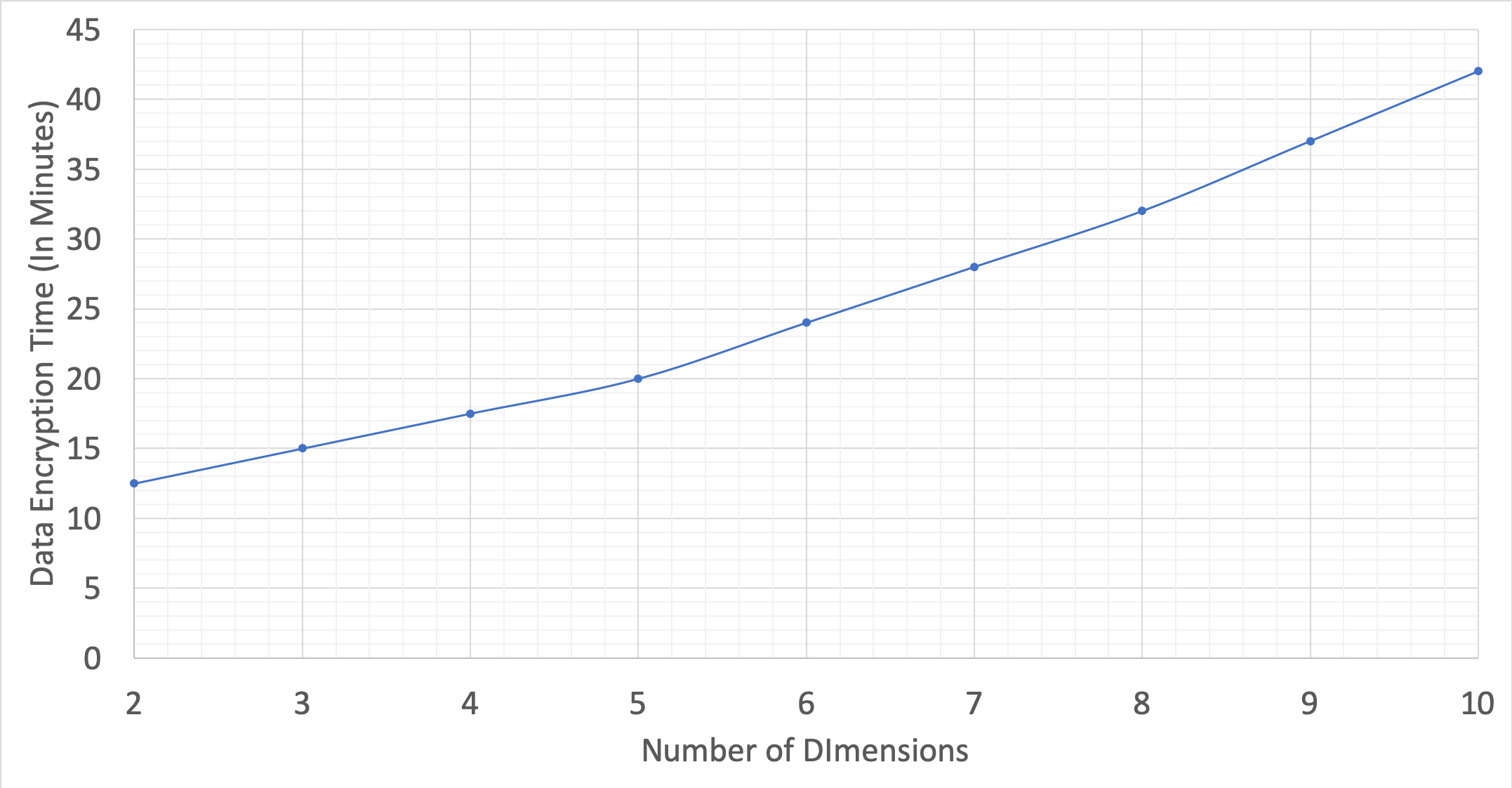}
	\caption{Impact of Data dimensions on Data Encryption per database copy (1000 Data Points)}
	\label{fig:data_enc_dim}
\end{figure}

\begin{figure}
	\includegraphics[width=\linewidth]{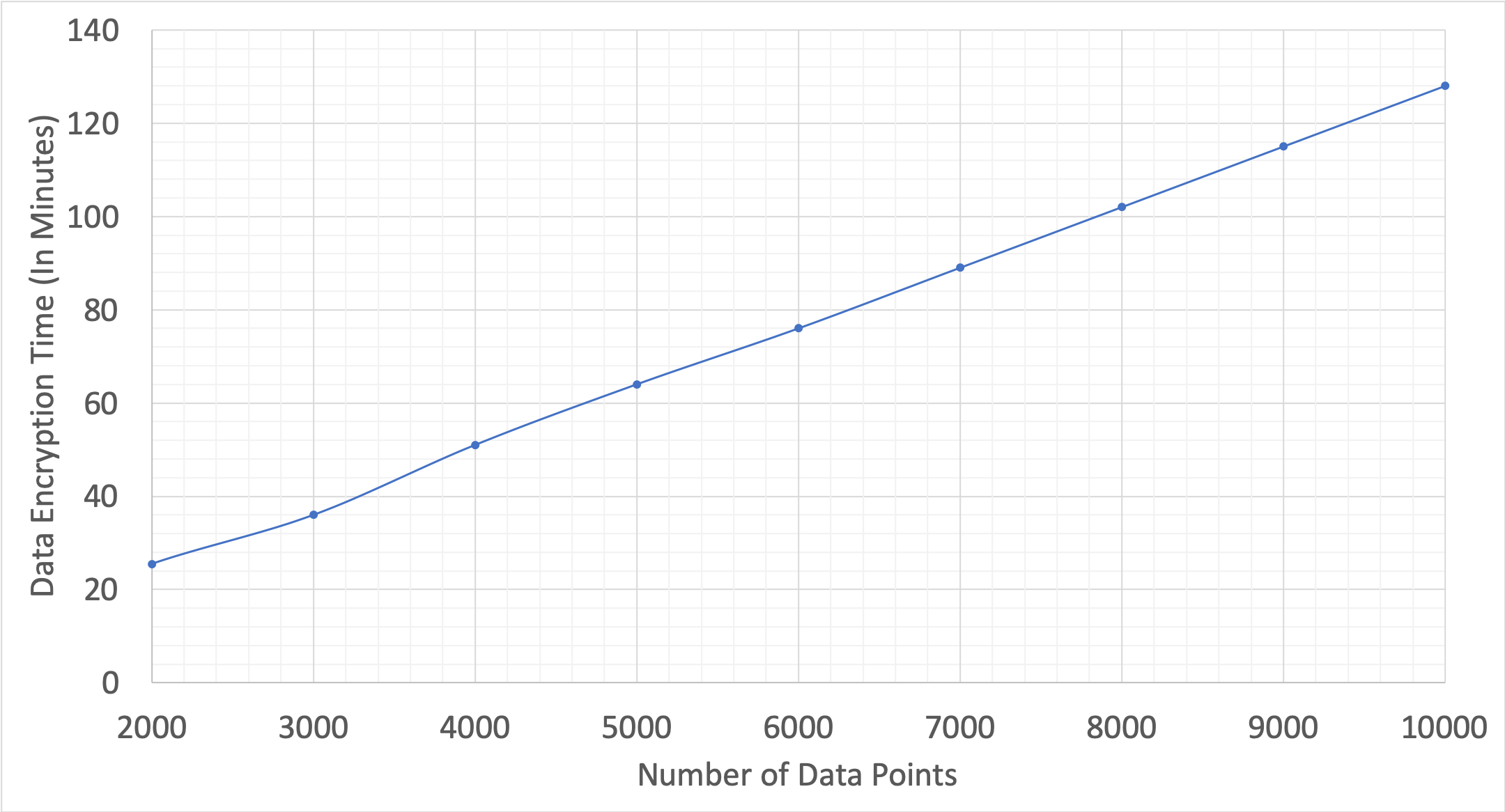}
	\caption{Impact of number of data points on Data Encryption per database copy (d=2)}
	\label{fig:data_enc_dp}
\end{figure}

\begin{figure}
	\includegraphics[width=\linewidth]{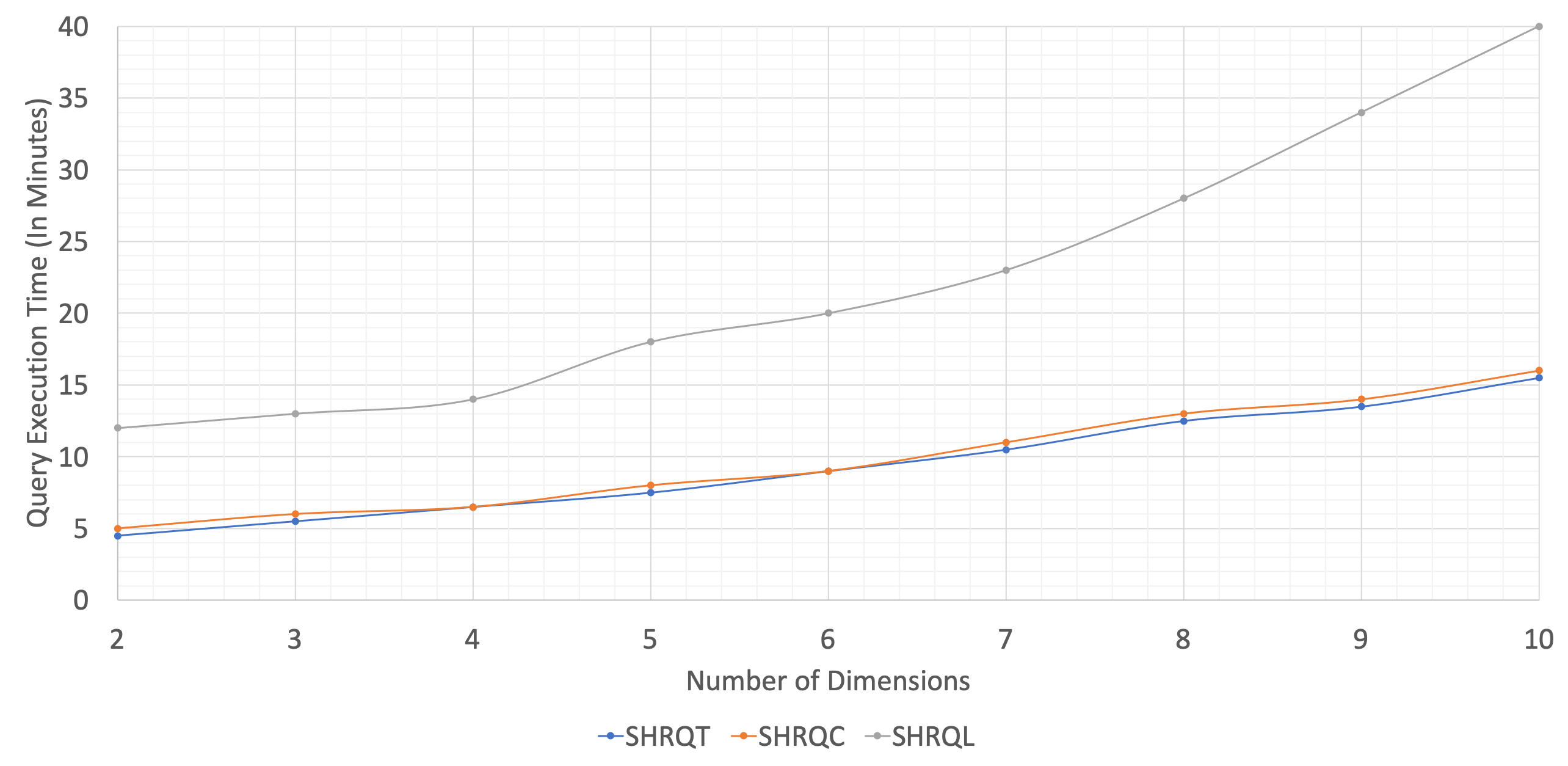}
	\caption{Impact of Data dimensions on Protocols (1000 Data Points)}
	\label{fig:prot_dim}
\end{figure}

\begin{figure}
	\includegraphics[width=\linewidth]{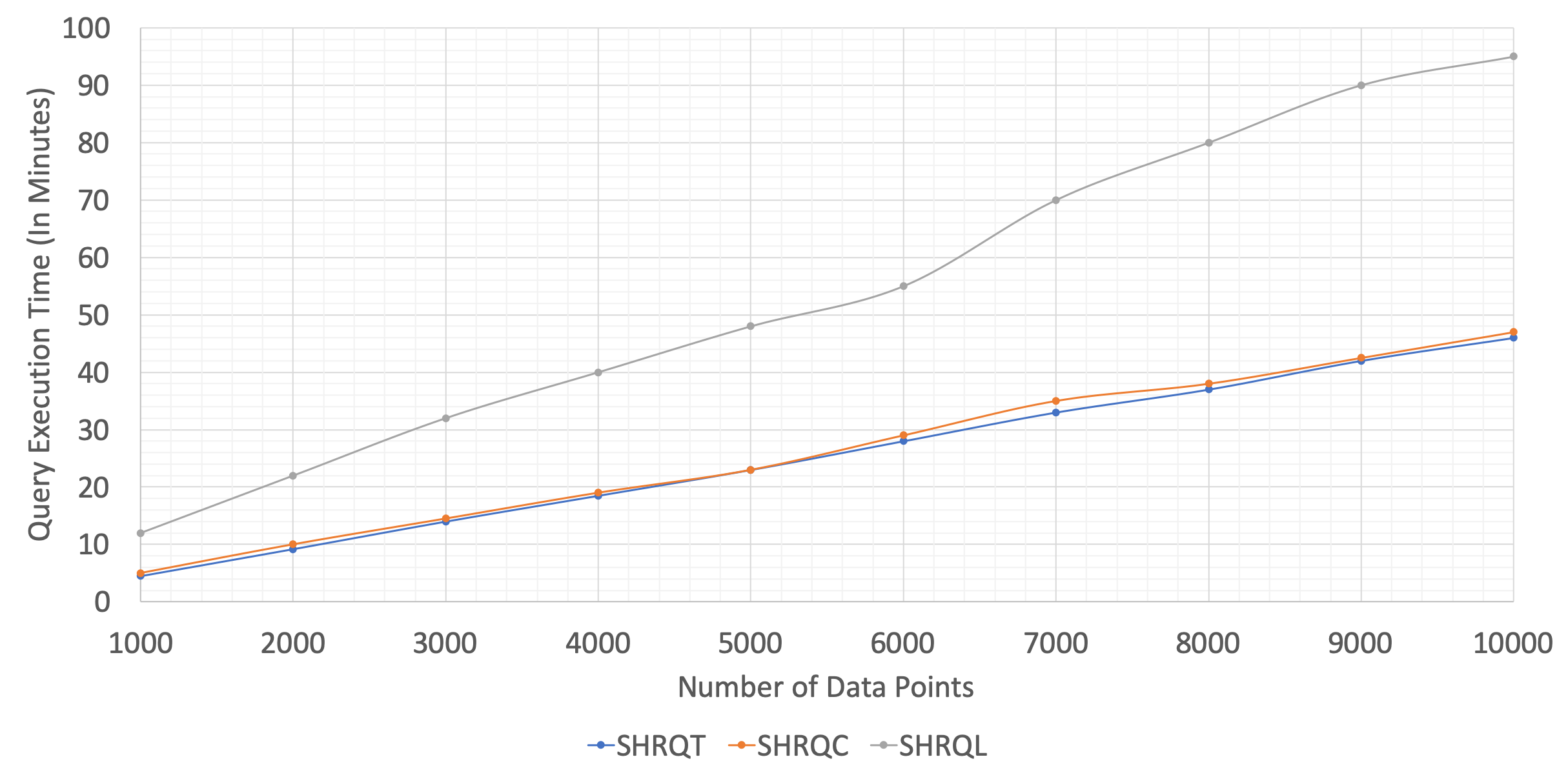}
	\caption{Impact of number of data points on Protocols (d=2)}
	\label{fig:prot_dp}
\end{figure}
\subsection*{Scope of Improvement}
The performance bottleneck in our protocols are the 
costly pairing operations (\emph{Homomorphic Multiplication}) between the encrypted query point and  data points in the big composite order group(1024 bits). The breakthrough work of Freeman \cite{more2010Freeman}, presents a framework where big composite order based Elliptic Curve (EC) schemes can be converted to comparably small prime order EC schemes. Using the framework they designed an modified version for BGN encryption scheme which improved the pairing operation by 33 times (Table 7 in \cite{more2013Guillevic}).
For our proposed scheme of $CES$, a separate design study is required for converting it to prime order EC scheme and has been left as future work.
 
\subsection*{Comparison to State-of-the-Art}
The closest relation of our work is  to two papers from Wang \emph{et al.} \cite{wang2015circular,wang2016geometric} which cover the problems of \emph{Circular Range Query(CRQ)} and \emph{Geometric Range Queries(GRQ)} over encrypted data respectively. Both the solutions suffer  slowness in their \emph{QueryEncryption} algorithms, which is directly proportional to solution set range. Solution of \cite{wang2015circular}, also suffers slowness in the \emph{Query Execution} over the CS. For radius values of 1 unit, 5 units and 10 units, the execution time over CS for a $|D|=1000$ and $d=2$, is around 10 secs, 40 secs and 100 secs. Thus showing linear rise with the rise in radius. Considering only the \emph{Query Execution} over CS, \cite{wang2016geometric} performs comparable to our system with time of 800 secs when $|D|=1000$ and $d=2$. However, altogether including the QueryEncryption times our scheme outperforms both of them.

The other relation of our work can be made to the papers dealing with \emph{Secure k Nearest Neighbors (SkNN)} problem. Beginning with the \emph{SkNN} solution presented by Wong \emph{et al.} \cite{kNN2009Wong} which used the idea of invertible  matrix multiplication, the performance was comparable to plain text kNN execution. However, Yao \emph{et al.} \cite{kNN2013Yao} broke the KPA security claim of \cite{kNN2009Wong}. \cite{kNN2013Zhu} also presented a Vornoi diagram based \emph{SkNN} solution which was very efficient but only worked for cases when $d=2$. Another work from Elmehdwi \emph{et al.} \cite{kNN2014Elmehdwi}, proposed a \emph{SkNN} solution in 2-cloud model where one of the cloud is having the encryption key. They developed the secure multiparty protocols in the proposed setting to develop the  solution satisfying stronger security notion of \emph{CPA}. However, the solution did suffer from huge latency which was directly proportional to $k$, $|D|$ and the domain size of each attribute$(l)$. Specifically, for $|D|=2000, l=12,d=6$ and $k=25$, the scheme took 650 minutes for completion. In comparison, our scheme is independent of $l$ and under similar security parameters took 48 minutes to completion.
\section{Related Work}
There has been considerable amount of research in the area of secure query processing over the encrypted data. Although the ground breaking work of Gentry \cite{gentryFHE} on \emph{Fully Homomorphic Encryption(FHE)} enabled arbitrary polynomial computation over encrypted data, the scheme had a high computational cost. Thus making it impractical for real world applications. Many application specific \emph{Partial Homomorphic encryption} schemes have been developed. Secure solutions varying from \emph{Geometric Queries},
\emph{Nearest Neighbors search(kNN)},
 \emph{Range predicate} 
 etc. to  \emph{database level secure systems} 
 exists in prior literature.

\textbf{Geometric Queries:} The only known papers in this domain are from Wang \emph{et al}. \cite{wang2015circular,wang2016geometric}. We analysed the performances of these scheme in section \ref{experiments}. At the core, both the schemes use \emph{Predicate Evaluation}  scheme of Shen \emph{et al} \cite{more2009Shen} using which it can be \emph{securely} checked if dot product of two vectors is 0 or not. 
In \cite{wang2015circular} , CRQ problem is solved by finding points on circumference of the circle and repeating the task by varying radius of cicles. While in\cite{wang2016geometric}, GRQ problem is addressed where each data point is stored using separate bloom filter. The  query phase includes building the \emph{bloom filter} of all points in query space and then using the idea of \cite{more2009Shen} it is checked if data bloom filter state exist in query bloom filter. As discussed earlier, both the solutions suffer from huge slowness in query execution phase compared to our solution.

%
\textbf{Nearest Neighbors:} Alot of literature exist in solving the \emph{Nearest Neighbors} problem over the encrypted data. We have covered the subset of these solutions ( Wong \emph{et al.} \cite{kNN2009Wong},Yao \emph{et al.} \cite{kNN2013Yao} and Elmehdwi \emph{et al.} \cite{kNN2014Elmehdwi}), in section \ref{experiments}. 
Following we give brief of  other solutions in same problem domain.

Hu \emph{et al}. \cite{kNN2011Hu} and Wang \emph{et al}. \cite{kNN2016Wang} built the solutions by developing technique of secure traversal over encrypted R-Tree index structure. While the performance of \cite{kNN2016Wang} is better than \cite{kNN2011Hu}, both the schemes suffers by not supporting the dynamic update or insertion of new records. Any such updates require the change in the index structure, which would be a costly operation. In contrast, we support dynamic updates and insertion.


 PIR (Private Information retrieval) for secure k-NN has been studied by Ghinita \emph{et al}. \cite{kNN2008Ghinita}, Papadopoulos \emph{et al}. \cite{kNN2010Papadopoulos} and Choi \emph{et al}. \cite{kNN2014Choi}. PIR allows users to retrieve  data stored at the CS in oblivious manner, so that user's query point is not revealed and also server doesn't know about which records are accessed. The data stored at server in such schemes is kept in plain text format which is different form our setting of work.

Schemes by Zhu \emph{et al}. \cite{kNN2013Zhu,kNN2016Zhu} and Singh \emph{et al}. \cite{kNN2018Singh} presented the solutions in multi-user setting. They considered the \emph{Query Privacy} metric in their solutions, so that Query Users query is not known to Data owner and Cloud server. The underlying idea of the schemes is similar to matrix multiplication idea of \cite{kNN2009Wong}.

Lin \emph{et al}. \cite{kNN2017Lin} extended the attack study of \cite{kNN2013Yao} and presented attacks on the related schemes
Lei \emph{et al}. \cite{kNN2017Lei} proposed a SkNN solution for 2-dimensional points by using LSH (Location sensitive hashing). They first construct the secure index around the data and then outsource the data and index to the cloud. Since the scheme uses LSH data structure, their result contains false positives.
\\
\textbf{Range Predicates:} Considerable amount of research exist in this area. Solutions ranging from \emph{order preserving encryption(OPE)} schemes to schemes specifically evaluating range predicates exist. Agrawal \emph{et al}. \cite{ope2004Agrawal} developed the first order preserving encryption scheme. They mapped the known plain text distribution with randomly chosen distribution. The mapped values represented the OPE ciphertext of plaintext values.
Boldyreva \emph{et al}. \cite{ope2009Boldyreva,ope2011Boldyreva} introduce the security notions w.r.t. OPE problem.
 They also presented the \emph{OPE} scheme that uses hypergeometric distribution
 to generate ordered cipher text. 
 Popa \emph{et al}. \cite{ope2013Popa} presented the first ideally secure (indistinguishability under ordered chosen plaintext attack, IND-OCPA) OPE scheme. The scheme presented is stateful and requires multiple round communications to generate ciphertext.
Kerschbaum \emph{et al}. \cite{ope2014Kerschbaum} presented the more efficient IND-OCPA secure scheme. The number of round communications are reduced to constant number in their scheme. Kerschbaum \cite{ope2015Kerschbaum}  also presented  \emph{Frequency hiding OPE} scheme, where the ciphertext of the scheme is randomized in the range of values.

Hore \emph{et al}. \cite{range2004Hore,range2012Hore} presented a scheme that built privacy preserving indices over sensitive columns using data partitioning techniques.The built index helped in supporting the obfuscated range queries. 
Li \emph{et al}. \cite{range2014Li} presented a solution using specially designed index tree from data, where all the binary prefixes of the data are stored. Further they used \emph{Bloom Filters} to hide the prefix information. Any range query over data is then converted to similar prefix identification over index structure. Due to  bloom filters scheme suffers from \emph{false positives}. Karras \emph{et al}. \cite{range2016Karras} presented range predicate solution using Matrix multiplication based technique. However Horst \emph{et al}. \cite{range2017Horst} presented the cryptanalysis of the scheme where they were able to break their security claims.
\\
\textbf{Secure Database Systems:} Hacig\"{u}m\"{u}\c{s} et al \cite{database2001Hacigumus} proposed the first system for executing SQL queries over encrypted data. They used bucketization technique at server to approximate the filtering of result set and thereafter proposed clients to perform final query processing. The secure database systems: CryptDB presented by Tu \emph{et al}. \cite{database2011Popa} and Monomi presented by Tu \emph{et al}. \cite{database2013Tu} used multiple encryption
schemes to give support for multiple SQL predicates. They
used OPE, Paillier encryption \cite{more1999Paillier} and Text Search \cite{search2000Song} schemes to provide support for the range predicate, addition and search queries respectively. 
However, both the schemes fail to support complex queries
where multiple operators are involved in a single query. 

In schemes of Bajaj \emph{et al}. \cite{database2011Bajaj} and Arasu \emph{et al}. \cite{database2013Arasu,database2015Arasu}, cloud uses the secure hardware for secure computation where the encryption keys are stored. For complex computations, the data is first decrypted at the secure hardware, then processed for evaluation and then the answer is encrypted back and sent back to client. Though the systems performance is good, it still has to keep the encryption keys at third party location, which could lead to trust deficit.

SDB presented by \cite{database2014Wong} performed query processing with a set of secure data-interoperable operators by using asymmetric secret-sharing scheme. They provided protocols to handle queries having across the column computations. Hahn \emph{et al}. \cite{database2019Hahn} presented the technique for secure database joins. They used searchable symmetric encryption and attribute based encryption to develop protocol for secure join.


\section{Conclusion and Future Work}
To conclude,  we presented a novel solution to solve the problem of \emph{Secure Hyper Sphere Range Query(SHRQ)} over encrypted data. While  existing solutions suffered from  scalable slowness for big range queries ($\mathcal{O}(R^2)$), our scheme guarantees constant time performance ($\mathcal{O}(c)$). Along with SHRQ solution, we also presented the \emph{Secure Range Query(SRQ)} solution with proven KPA security and column obvious execution.

Future work directions on the proposed system includes: 1) Designing the solution using prime order EC curves. It is known that prime order EC curves perform better than composite order EC curves. 2) Extend the solution in framework where DO and QU are considered as separate entities. 3) Performing the analysis to support  other spatial queries like kNN, Geometric queries etc. 

\bibliographystyle{ACM-Reference-Format}
\bibliography{vldb_sample}

\end{document}